\newcommand{\dif}{\mathrm{d}}
\newcommand{\bx}{\mathbf{x}}
\newcommand{\by}{\mathbf{y}}
\newcommand{\br}{\mathbf{r}}
\newcommand{\bn}{\mathbf{n}}
\newcommand{\ext}{\mathrm{ext}}
\DeclareMathOperator*{\FP}{FP}
\theoremstyle{remark}
\newtheorem{lem}{Lemma}
\numberwithin{equation}{section}
\renewcommand{\theequation}{\arabic{section}.\arabic{equation}}
\begin{document}
\title{Post-Newtonian approximation for isolated systems by matched asymptotic expansions  \\I.  General structure revisited}
\author{W.~G. Dixon}
\email{graham.dixon@ntlworld.com}
\affiliation{Churchill College, Cambridge CB3 0DS, U.K.}
\date{\today}
\begin{abstract}
  In recent years post-Newtonian approximations for isolated slowly-moving systems in general relativity have been studied by means of matched asymptotic expansions.  A paper by Poujade \& Blanchet in 2002 made great progress by effectively reducing the use of such expansions to an algorithmic form.  It gave systematic procedures for the development of both near-zone and far-zone asymptotic expansions, avoiding the divergent integrals which often bedevilled such methods, and
  showed that these two expansions could be made to match exactly, a result described there as somewhat remarkable.

  This paper revisits that work and shows that there is unfortunately an error in it which invalidates the results of the matching process as given therein.  The present paper identifies that error and shows how it may be corrected to give valid matching results.  The correction is presented in a redevelopment somewhat different from that of their paper.  This shows that far from being somewhat remarkable, it is in fact inevitable that the match is exact.
  It is indeed remarkable that they could carry both expansions to the point at which matching becomes possible, but if it can be done at all, then the match is necessarily exact.

  A companion paper will apply this asymptotic matching to a model problem developed by the present author in 1979 as a test bed for future developments in post-Newtonian approximations.  In this model, the correct near-zone expansion was obtained by approximation from an exact solution.  It will be shown that there is a discrepancy between this expansion and results from the original development of Poujade \& Blanchet but that the corrected development presented here reproduces the result of the model problem exactly.
\end{abstract}
\maketitle
\section{Introduction}
In recent years post-Newtonian approximations for isolated slowly-moving systems in general relativity have been studied by means of matched asymptotic expansions, see the review by Blanchet \cite{B:2013} and many references contained therein.
Great progress in this direction was made by Poujade \& Blanchet \cite{PB:2002} in a paper of 2002, results from which form a basis for Section 5 of \cite{B:2013}.  That paper achieved what might well have been thought to be impossible.  This was a matching to all orders of approximation of a near-zone post-Newtonian asymptotic expansion with a corresponding far-zone post-Minkowskian one.  As a necessary prerequisite, itself no mean feat, procedures were given for generating both expansions to any desired order by means that avoided the divergent integrals which had often occurred in earlier work and which had cast doubt on the validity of the post-Newtonian approximation procedure itself.

I came across this paper only recently.  When I did so I decided to test their results against a model problem that I had developed \cite{Dixon:1979} in 1979 specifically as a test bed for slow-motion approximations in general relativity.  That model was sufficiently simple that it could be solved exactly by Fourier analysis in time and expansion in spherical harmonics.  The solution was expanded in powers of $1/c$ and the spherical harmonics re-summed to give a near-zone asymptotic expansion which included two features not at that time seen in any corresponding expansion in general relativity.  The first was that it was not a series purely in powers of $1/c$ but also included terms involving $\log(1/c)$.  The other was the appearance of tail terms involving integration over all past time.

Both these features are present in the results of \cite{PB:2002} so the model seemed to provide a worthwhile test.  I found, however, that the methods of \cite{PB:2002} gave additional terms not present in the model. These were traced to a subtle error in the matching process in \cite{PB:2002}.  The present paper identifies the error and shows how it may be corrected.  It is shown in a companion paper that the corrected version does reproduce the results of the model.

Presentation of the error and its correction necessarily requires some of the development of Poujade \& Blanchet to be repeated in a corrected form.  The opportunity has been taken to present a somewhat modified treatment that, it is hoped, sheds new light on their approach.  In particular it makes clear why, once matching becomes possible, it will inevitably be exact.  Any failure of the match to be exact would in fact indicate an error in the development of one or both of the expansions concerned.  One major result used by Poujade \& Blanchet in developing the far-zone expansion depends on an indirect proof based on a uniqueness argument.  To ensure that this does not hide any other subtle error, this paper gives a direct proof of the result concerned.

\section{Preliminaries and notation\label{prelim}}
The development in \cite{PB:2002} is set in a specific context and uses a specialized notation.  As these will be needed in what follows, they are summarised here. The context is a family of solutions of the Einstein field equations
\begin{equation}\label{2.1}
  G^{\mu\nu} = \kappa T^{\mu\nu}
\end{equation}
parameterized by the Einstein gravitational `constant' $\kappa$, treated as variable, with the spacetime being flat in the limit $\kappa\rightarrow 0$ and asymptotically flat for all $\kappa$.  Here $G^{\mu\nu}$ is the Einstein tensor of the spacetime metric and $T^{\mu\nu}$ is the stress-energy tensor of the material source.  The spacetime manifold $M$ and its coordinate system are considered to be fixed, with the components of the metric and stress-energy tensors being smooth functions of $\kappa$.

Lower-case Greek indices run from 0 to 3, lower-case Latin indices from 1 to 3, and the summation convention applies to repeated indices.  Upper-case indices have a special meaning described below. Partial differentiation with respect to the $\mu$th coordinate is denoted by $\partial_\mu$ and repeated differentiations by $\partial_{\mu\nu\ldots} := \partial_\mu\partial_\nu\ldots$.  Other than (\ref{2.1}), equations with Greek indices are not in tensor form and are relations between components rather than geometric objects. The sets of complex, real, integer and non-negative integer numbers are denoted respectively by $\mathbb{C}$, $\mathbb{R}$, $\mathbb{Z}$ and $\mathbb{N}$.

The components of the contravariant metric tensor density are denoted by $\mathfrak{g}^{\mu\nu}$.  The Einstein field equations are not used in the form (\ref{2.1}) but instead in the equivalent form
\begin{equation}\label{2.2}
  \partial_{\rho\sigma}(\mathfrak{g}^{\rho\sigma} \mathfrak{g}^{\mu\nu} - \mathfrak{g}^{\rho\mu} \mathfrak{g}^{\sigma\nu}) = 2\kappa(-g)(T^{\mu\nu} + t^{\mu\nu})
\end{equation}
where $t^{\mu\nu}$ is the Landau-Lifshitz gravitational pseudo-tensor \cite{LL:1951} and $g$ is the determinant of the covariant form of the metric tensor (or equivalently of $\mathfrak{g}^{\mu\nu}$).

The mapping of the metric tensor to the manifold is such that the coordinates are harmonic for all $\kappa$, so that $\partial_\mu \mathfrak{g}^{\mu\nu} = 0$, with $\mathfrak{g}^{\mu\nu}$ taking a diagonal form with components $(-1,+1,+1,+1)$ in the limit $\kappa\rightarrow 0$ and at spatial infinity for all $\kappa$. This limiting value is denoted by $\eta^{\mu\nu}$ and the gravitational field is represented by the difference
\begin{equation}\label{2.3}
  h^{\mu\nu} := \mathfrak{g}^{\mu\nu} - \eta^{\mu\nu}
\end{equation}
which therefore vanishes in those limits. Under these conditions (\ref{2.2}) can be put in the form
\begin{equation}\label{2.4}
  \Box h^{\mu\nu} = 2\kappa\tau^{\mu\nu}
\end{equation}
where $\Box := \eta^{\mu\nu}\partial_{\mu\nu}$ is the d'Alembertian operator of flat spacetime,
\begin{equation}\label{2.5}
  \tau^{\mu\nu} := (-g)T^{\mu\nu} + (2\kappa)^{-1}\Lambda^{\mu\nu}
\end{equation}
and
\begin{equation}\label{2.6}
  \Lambda^{\mu\nu} := 2\kappa(-g)t^{\mu\nu} - h^{\rho\sigma}\partial_{\rho\sigma}h^{\mu\nu}.
\end{equation}

The field $\Lambda^{\mu\nu}$ depends only on $h^{\mu\nu}$.  It may be considered as a function of $h^{\mu\nu}$, expandable as an infinite series in $h^{\mu\nu}$ and its derivatives with all terms being at least quadratic in these quantities.  The expression for $t^{\mu\nu}$ has an overall factor $(2\kappa)^{-1}$ which cancels the $2\kappa$ in the first term of (\ref{2.6}) so the appearance of a parameter in the expression for $\Lambda^{\mu\nu}$ is misleading.

The right hand side of \eqref{2.2} is divergence-free since the left hand side is identically so.  This is a physical requirement equivalent to the covariant conservation equation satisfied by $T^{\mu\nu}$.  It is shown in \cite{PB:2002} that this leads to the harmonic coordinate condition in the form $\partial_\mu h^{\mu\nu} = 0$ being automatically satisfied by the solution of \eqref{2.4} given by the procedures of that paper. That proof remains valid with the developments of the present paper and so is not repeated here. The present paper is therefore concerned solely with the solution of the field equations in the form (\ref{2.4}).  Moreover, all that is needed concerning the dependence of $\Lambda^{\mu\nu}$ on $h^{\mu\nu}$ is its quadratic nature described above, so its explicit form need not be given here.

If $(x^0,x^1,x^2,x^3)$ are the coordinates of a point $x \in M$ then for all values of $\kappa$, $(x^1,x^2,x^3)$ are treated as the components of a three-dimensional flat-space Cartesian vector $\bx$ and $t := x^0/c$ is taken as time.  The unit of time is considered to be variable, so making $c$, the speed of light, be a variable parameter.  With these conventions the Einstein gravitational constant $\kappa$ is related to the Newtonian one $G$ by $\kappa = 8\uppi G/c^4$.

The field equations are solved by matching two asymptotic expansions, both of which are expressed in terms of spherical harmonics represented as symmetric trace-free (STF) tensors.  Let $(r,\theta,\phi)$ be spherical polar coordinates and $\bn(\theta,\phi)$, with components $n_i$,  be a unit vector in the direction represented by these polar angles.  Then the STF product of $l$ copies of $\bn$, formed by subtracting from their product such multiples of the unit tensor as is required to make the result symmetric and trace-free, has $(2l+1)$ linearly independent components that are themselves functions of $(\theta,\phi)$.  These form a basis for the spherical harmonics of order $l$ that is equivalent to, though different from, the more usual $Y_l^m(\theta,\phi)$.

A product of multiple copies of any vector is denoted by the same kernel letter but with multiple indices.  A set of distinct indices is denoted in abbreviated form by a single upper-case index with the corresponding lower-case letter denoting the number of indices.  Putting these two conventions together, $n_L$ denotes the product of $l$ copies of the vector $n_i$.  Similarly $\partial_L$ denotes $l$ repeated differentiations with respect to the spatial coordinates.

In addition to the common use of round brackets $(~)$ around a set of indices to denote symmetrisation and square brackets $[~]$ to denote antisymmetrisation, diamond brackets $<~>$ are used to denote the STF part, \emph{i.e.} the symmetric part with the traces removed as described above.  If the STF part is taken of the entire set of indices of a tensor then it is alternatively denoted by `hatting' the kernel letter of the tensor.  The spherical harmonics of order $l$ can therefore be represented either by $n_{<L>}$ or more concisely by $\hat{n}_L$.  The solutions of the Laplace equation with this angular dependence are $r^l \hat{n}_L$ and $r^{-l-1} \hat{n}_L$.  The first of these can be written even more concisely as $\hat{x}_L$ where $\bx \equiv r\bn$ is the position vector.

Equation (\ref{2.4}) is solved with the various fields involved being treated as functions of $\bx$ and $t$. Series expansions are obtained around two different limits, $G\rightarrow 0$ with $c$ fixed and $(1/c)\rightarrow 0$ with $G$ fixed.  Both limits have $\kappa\rightarrow 0$ but the second limit is singular since for all $x \in M$, $t \rightarrow 0$ as $(1/c)\rightarrow 0$.  Now for a given source the values of both $G$ and $T^{\mu\nu}$ depend on the unit of time, the latter having the dimensions of an energy density.  It is $G/c^2$ and $T^{\mu\nu}/c^2$ that are independent of this unit, so for the material source not to vanish as $(1/c)\rightarrow 0$ it is necessary to have $T^{\mu\nu} = \mathcal{O}(c^2)$ in this limit for fixed $\bx$, $t$ and $G$.  Moreover, to maintain consideration of a generic point in which $|\bx|/x^0$ tends neither to zero nor infinity in the limit $(1/c) \to 0$, if $t = x^0/c$ is to remain finite and nonzero then so also must $|\bx|/c$.  Letting $(1/c) \to 0$ is therefore equivalent to $|\bx| \to 0$, making this expansion be an asymptotic expansion for the limit $r \to 0$, where $r := |\bx|$.  It is tempting to think of the limit $G \to 0$ as giving a corresponding asymptotic expansion for the limit $r \to \infty$, but it is not a singular limit and validity of the expansion is not limited to the far zone.  This fact is crucial to the procedures of Poujade \& Blanchet, as will be seen in Section \ref{match}.

\section{The post-Newtonian expansion\label{PN}}
The expansion about the Newtonian limit $(1/c)\rightarrow 0$ will be considered first, as it is here that the problem with the treatment in \cite{PB:2002} arises.  Equation (\ref{2.4}) can be rewritten as
\begin{equation}\label{3.1}
  \triangle h^{\mu\nu} = \frac{16\uppi G}{c^4}\,\tau^{\mu\nu} + c^{-2}\,\partial^2_t h^{\mu\nu}
\end{equation}
where $\triangle$ is the Laplace operator and $\partial_t := \partial/\partial t$.  The source $\tau^{\mu\nu}$ is formally expanded in a power series in $1/c$ as
\begin{equation}\label{3.2}
  \bar{\tau}^{\mu\nu}(\bx,t,c) = \sum_{n=-2}^\infty c^{-n}\,\underset{n}{\bar{\tau}}^{\mu\nu}(\bx,t,c)
\end{equation}
which starts with a power $-2$ since it was seen above that $T^{\mu\nu} = \mathcal{O}(c^2)$.  The corresponding expansion of the solution is
\begin{equation}\label{3.3}
  \bar{h}^{\mu\nu}(\bx,t,c) = \sum_{n=2}^\infty c^{-n}\,\underset{n}{\bar{h}}^{\mu\nu}(\bx,t,c),
\end{equation}
which starts at $n=+2$ in view of the $c^{-4}$ factor in \eqref{3.1}.  The overline on the symbols $\bar{\tau}^{\mu\nu}$ and $\bar{h}^{\mu\nu}$ indicates that they are post-Newtonian expansions.  An overline will also be used on expressions to denote that they are to be expanded in this way in powers of $1/c$.

Note that the individual terms in \eqref{3.2} and \eqref{3.3} are also allowed to depend on $c$.  This is to allow for the fact that the solution may not have an expansion that is purely in powers of $1/c$. It will be seen in Section \ref{match} that terms of the form $c^{-n} (\log c)^p$ are also required.  Such terms are to be grouped by the power of $c$ they include, so giving a $c$-dependence to the coefficients in the post-Newtonian expansions.

From here on this additional $c$-dependence will be left implicit and when two such expansions are compared, they will be matched by the powers of $c$ that appear explicitly. Substituting \eqref{3.2} and \eqref{3.3} into \eqref{3.1} therefore gives
\begin{equation}\label{3.4}
  \triangle \underset{n}{\bar{h}}^{\mu\nu} = 16\uppi G \underset{n-4}{\tau^{\mu\nu}} + \partial^2_t \underset{n-2}{\bar{h}^{\mu\nu}}
\end{equation}
for $n\geq 2$, with the convention $\underset{0}{\bar{h}}^{\mu\nu} = \underset{1}{\bar{h}}^{\mu\nu} = 0$.  This may be iterated to give
\begin{equation}\label{3.5}
  \triangle^{\lfloor n/2 \rfloor} \underset{n}{\bar{h}}^{\mu\nu} = 16\uppi G  \sum_{i=0}^{\lfloor n/2 \rfloor-1} \partial_t^{2i}\triangle^{\lfloor n/2 \rfloor -i-1} \underset{n-2i-4}{\bar{\tau}^{\mu\nu}}
\end{equation}
where $\lfloor x \rfloor$ denotes the floor of $x$, \emph{i.e.} the greatest integer not exceeding $x$, for any $x \in \mathbb{R}$.

Equation (\ref{3.5}) has the generic form
\begin{equation}\label{3.6}
  \triangle^{k+1}\bar{h} = \bar{\tau}
\end{equation}
with $k \in \mathbb{N}$.  If $\bar{\tau}$ is spatially bounded then a particular solution is given by
\begin{equation}\label{3.7}
  \bar{h}(\bx,t) = \triangle^{-k-1}[\bar{\tau}](\bx,t) := - \,\frac{1}{4\uppi}\,\int \dif^3 \by \frac{|\bx-\by|^{2k-1}}{(2k)!}\,\bar{\tau}(\by,t).
\end{equation}
To verify this, $k$ applications of $\triangle$ may be taken under the integral sign to leave the standard Poisson integral solution for the remaining  $\triangle$ operator.  It is important to note that (\ref{3.7}) defines the functional  $\triangle^{-k-1}$ as a single entity.  It is not the same as the $(k+1)$-fold iteration of $\triangle^{-1}$, which does not exist as iterations after the first give divergent integrals.

In \cite{PB:2002}, $\triangle^{-1}$ is extended by a regularization process to operands that diverge as $r \to 0$ no faster than a negative power of $r$ and as $r \to \infty$ no faster than a positive power of $r$.  Here this process needs to be applied to $\triangle^{-k-1}$ for a general $k \in \mathbb{N}$.  The treatment follows closely that of Appendix B of \cite{PB:2002} and so will be described only in outline.

The source function $\bar{\tau}(\bx,t)$ in (\ref{3.7}) is multiplied by a regularization factor $|\tilde{\bx}|^B$ where $\tilde{\bx} := \bx/r^0$, $r_0$ is a positive real parameter of the regularization process with $B \in \mathbb{C}$, and the range of integration is split into two parts, $|\by| < \mathcal{R}$ and $|\by| > \mathcal{R}$ for some $\mathcal{R} > 0$.  Both integrals, together with their formal derivative with respect to $B$, will converge for a suitable range of $\Re(B)$ and so will be analytic functions of $B$ within those ranges, though the ranges for the two integrals will not in general overlap.  The two results are extended by analytic continuation and added.  Their sum is defined on the overlap between the two regions of analytic continuation and is independent of the choice of $\mathcal{R}$ since the integral taken over the range between two different spheres $|\by| = \mathcal{R}$ is analytic for all $B$.  It is therefore a well-defined functional of $|\tilde{\bx}|^B \bar{\tau}(\bx,t)$ that will be denoted by $\triangle_B^{-k-1}$, a notation not used in \cite{PB:2002} but which is used here for clarity.  Note that the suffix $B$ denotes only the operations of splitting, analytic continuation and adding.  It follows that $\triangle_B^{-k-1}$ can be applied to any operands that are analytic functions of $B$ for which the two split integrals converge for suitable ranges of $B$.

Consider now the case when the range of definition of $\triangle_B^{-k-1}[\tilde{r}^B \bar{\tau}]$, where $r$ is the radial function, includes a deleted neighbourhood of $B=0$.  It may then be expanded in a Laurent series about $B=0$ with the coefficients being functionals of $\bar{\tau}$.  The finite part, \emph{i.e.} the coefficient of $B^0$, is the required extension of the integral in (\ref{3.7}).  For this and other functions of a complex variable $B$, this operation of selecting the finite part will be denoted by $\FP\limits_{B=0}$.  The extension of (\ref{3.7}) is then
\begin{equation}\label{3.8}
  \widetilde{\triangle^{-k-1}}[\bar{\tau}] := \FP\limits_{B=0}\triangle_B^{-k-1}[\tilde{r}^B \bar{\tau}].
\end{equation}

It is shown in Appendix \ref{PI} that
\begin{equation}\label{3.9}
  \triangle_B^{-1}[\hat{n}_L r^{B+a}] = \frac{\hat{n}_L r^{B+a+2}}{(B+a+2-l)(B+a+3+l)}
\end{equation}
for $a \in \mathbb{Z}$ and that
\begin{equation}\label{3.9a}
\triangle_B^{-k-1}[\hat{n}_L r^{B+a}] = (\triangle_B^{-1})^{k+1}[\hat{n}_L r^{B+a}].
\end{equation}
Here $\hat{n}_L$ denotes the spherical harmonics of order $l$ expressed as symmetric trace-free (STF) tensors as described in Section \ref{prelim}.  Note that this is a special case.  In general $\triangle_B^{-1}$ cannot be iterated as $\triangle_B^{-1}[\tilde{r}^B \bar{\tau}]$ will vanish at infinity no faster than $(1/r)$, whatever the value of $B$.  What is distinctive about this special case is that the overall factor $r^B$ remains at all stages of iteration.

This result enables us to demonstrate by a counterexample that in general $\widetilde{\triangle^{-k-1}}[\bar{\tau}] \neq (\widetilde{\triangle^{-1}})^{k+1}[\bar{\tau}]$.  It is here that the error in \cite{PB:2002} originates, for it is stated there in the context of its equation (3.9) that `it is not difficult to show' that these are equal.  With $a={}-l-2$, (\ref{3.9}) above gives
\begin{equation}\label{3.10}
  \triangle_B^{-1}[\hat{n}_L r^{B-l-2}] = \frac{\hat{n}_L r^{B-l}}{(B-2l)(B+1)}
\end{equation}
from which
\begin{equation}\label{3.11}
  \widetilde{\triangle^{-1}}[\hat{n}_L r^{-l-2}] = \frac{\hat{n}_L r^{-l}}{(-2l)}
\end{equation}
and hence
\begin{equation}\label{3.12}
  \triangle_B^{-1}[r^B\widetilde{\triangle^{-1}}[\hat{n}_L r^{-l-2}]] = \frac{\hat{n}_L r^{B-l+2}}{(-2l)(B-2l+2)(B+3)}.
\end{equation}
On the other hand
\begin{equation}\label{3.13}
  \triangle_B^{-2}[\hat{n}_L r^{B-l-2}] = \frac{\hat{n}_L r^{B-l+2}}{(B-2l)(B+1)(B-2l+2)(B+3)}.
\end{equation}
These are both analytic at $B=0$ if $l \neq 1$, but when $l=1$ they both have a simple pole at $B=0$. The finite parts, however, differ.  Since $r^B = \mathrm{e}^{B\log r}$ they give respectively
\begin{equation}\label{3.14}
  (\widetilde{\triangle^{-1}})^2[\hat{n}_L r^{-3}] = \frac{1}{36}\hat{n}_L r(2 - 6 \log \frac{r}{r_0})
\end{equation}
and
\begin{equation}\label{3.15}
  \widetilde{\triangle^{-2}}[\hat{n}_L r^{-3}] = \frac{1}{36}\hat{n}_L r(5 - 6 \log \frac{r}{r_0}).
\end{equation}
Since $\hat{n}_L$ for $l=1$ is just the vector $\mathbf{n}$, these show that
\begin{equation}\label{3.16}
  \widetilde{\triangle^{-2}}[ r^{-4}\br] = (\widetilde{\triangle^{-1}})^2[ r^{-4}\br] + \frac{1}{12}\,\br.
\end{equation}

Return now to (\ref{3.7}) applied to $\tilde{r}^B\bar{\tau}$.  It follows from this that
\begin{equation}\label{3.17}
  \triangle(\triangle_B^{-k-1}[\tilde{r}^B\bar{\tau}]) = \left\{\begin{array}{ll}
                                              \triangle_B^{-k}[\tilde{r}^B\bar{\tau}] & \mbox{for } k > 0 \vspace{3pt}\\
                                              \tilde{r}^B\bar{\tau} & \mbox{for } k = 0.
                                            \end{array}\right.
\end{equation}
The result for $k>0$ follows from applying $\triangle$ to each of the two split integrals for $\triangle^{-k-1}[\tilde{r}^B\bar{\tau}]$.  This gives the corresponding result for each part within the range of $\Re(B)$ for which it converges, which therefore holds also for their analytic continuations and so also for their sum.  For the case $k=0$ the two split integrals correspond to the standard Poisson integral applied respectively to the source functions $H(\mathcal{R} - r)\tilde{r}^B\bar{\tau}$ and $H(r - \mathcal{R})\tilde{r}^B\bar{\tau}$ where $H$ is the Heaviside step function.  Applying $\triangle$ to each returns these source functions, again initially within the range of $\Re(B)$ for which it converges but then extending throughout the domain of analytic continuation.  Their sum therefore gives the result for $k=0$.  There is no problem with points where $|\bx|=\mathcal{R}$ since the results follow for all $\mathcal{R}>0$.

If the operands are expanded about $B=0$ in their Laurent series, $\triangle$ acts on each series term by term.  The results can therefore be equated term by term with the corresponding expansions of the right hand side of (\ref{3.17}), so in particular the finite parts can be equated to give
\begin{equation}\label{3.18}
  \triangle(\widetilde{\triangle^{-k-1}}[\bar{\tau}]) = \left\{\begin{array}{ll}
                                              \widetilde{\triangle^{-k}}[\bar{\tau}] & \mbox{for } k > 0 \\
                                              \bar{\tau} & \mbox{for } k = 0.
                                            \end{array}\right.
\end{equation}
For the case $k=0$ the Laurent series for $\tilde{r}^B\bar{\tau}$ is just the Taylor series for $\mathrm{e}^{B\log \tilde{r}}\bar{\tau}$ whose finite part is just $\bar{\tau}$.

It follows from (\ref{3.18}) that a particular solution of (\ref{3.5}) is
\begin{equation}\label{3.19}
  (\underset{n}{\bar{h}}^{\mu\nu})_{\mathrm{part}} = 16\uppi G  \sum_{i=0}^\infty \partial_t^{2i}\widetilde{\triangle^{-i-1}}[\underset{n-2i-4}{\bar{\tau}^{\mu\nu}}]
\end{equation}
where $\underset{n}{\bar{\tau}}^{\mu\nu}$ is taken to be zero if $n<-2$.  This convention allows the summation formally to be taken to infinity as all terms with $i>\lfloor n/2 \rfloor-1$ are zero.  It can also be seen from (\ref{3.18}) that this would still be \emph{a} particular solution if $\widetilde{\triangle^{-i-1}}$ was replaced by $(\widetilde{\triangle^{-1}})^{i+1}$ but it would not be the \emph{same} particular solution.  The two would differ by a solution of the corresponding homogeneous equation.  It is this latter choice that is made in \cite{PB:2002} for although the notation $\widetilde{\triangle^{-i-1}}$ is used, it is \emph{defined} there to be $(\widetilde{\triangle^{-1}})^{i+1}$.  It will be seen in Section \ref{match} that the choice is not a free one as the matching with the far-zone asymptotic expansion requires the choice (\ref{3.19}).

The required solution of (\ref{3.5}) may differ from either of these particular solutions by a solution of the corresponding homogeneous equation.  The required addition cannot be determined from boundary conditions, for although the terms of the series (\ref{3.3}) must be finite at the origin, the series is asymptotic.  It diverges at spatial infinity and so no boundary condition can be applied there.  The addition must be the general solution of the homogeneous equation finite at the origin, with parameters to be determined later from the matching process.

The route to the general form is somewhat different from that in \cite{PB:2002} as that paper works directly with (\ref{3.4}).  Its particular solution is taken as
\begin{equation}\label{3.20}
  (\underset{n}{\bar{h}}^{\mu\nu})_\mathrm{part} = 16\uppi G \widetilde{\triangle^{-1}}[\underset{n-4}{\tau^{\mu\nu}}] + \partial^2_t \widetilde{\triangle^{-1}}[\underset{n-2}{\bar{h}^{\mu\nu}}]
\end{equation}
which is then iterated.  This results in the use of $(\widetilde{\triangle^{-1}})^{i+1}$ directly, not leaving any room for the use of an alternative operator.  To work from (\ref{3.5}) it is the homogeneous part of the solution that is found iteratively.

As seen in Section \ref{prelim}, the general solution of the Laplace equation $\triangle \bar{h} = 0$ finite at the origin is a linear combination of the spherical harmonic solutions $\hat{x}_L$.  Hence $\triangle^2 \bar{h} = 0$ can be solved in two stages.  First $\triangle \bar{h}$ must be a solution of the Laplace equation and so can be equated to such a linear combination.  That equation then has a particular solution consisting of a linear combination of terms of the form $\widetilde{\triangle^{-1}}[\hat{x}_L]$, to which must be added the Laplace solution consisting of terms of the form $\hat{x}_L$.  Proceeding in this way gives the general solution of (\ref{3.5}) as
\begin{equation}\label{3.21}
  \underset{n}{\bar{h}}^{\mu\nu} = 16\uppi G  \sum_{i=0}^\infty \partial_t^{2i}\widetilde{\triangle^{-i-1}}[\underset{n-2i-4}{\bar{\tau}^{\mu\nu}}]
  + \sum_{l=0}^\infty \sum_{i=0}^{\lfloor n/2\rfloor-1} \underset{n}{B}{}_{L,i}^{\mu\nu}(t) (\widetilde{\triangle^{-1}})^i[\hat{x}_L]
\end{equation}
for $n\geq 2$, where the functions $\underset{n}{B}{}_{L,i}^{\mu\nu}(t)$ are the parameters of the homogeneous solution.

These solutions must satisfy (\ref{3.4}), which requires
\begin{equation}\label{3.22}
  \underset{n}{B}{}_{L,i}^{\mu\nu} = \partial_t^2 \underset{n-2}{B}\,{}_{L,i-1}^{\mu\nu}
\end{equation}
for $n\geq 4$ and $1\leq i\leq \lfloor n/2\rfloor-1$.  This iterates to give
\begin{equation}\label{3.23}
  \underset{n}{B}{}_{L,i}^{\mu\nu} = \partial_t^{2i}\underset{n-2i}{B}{}_{L}^{\mu\nu}
\end{equation}
valid for $n\geq 2$ and $0\leq i\leq \lfloor n/2\rfloor-1$, where $\underset{n}{B}{}_L^{\mu\nu}:=\underset{n}{B}{}_{L,0}^{\mu\nu}$.  Hence (\ref{3.21}) can be put in the form
\begin{equation}\label{3.24}
  \underset{n}{\bar{h}}^{\mu\nu} = 16\uppi G  \sum_{i=0}^\infty \partial_t^{2i}\widetilde{\triangle^{-i-1}}[\underset{n-2i-4}{\bar{\tau}^{\mu\nu}}]
  + \sum_{l=0}^\infty \sum_{i=0}^\infty \partial_t^{2i} \underset{n-2i}{B}{}_{L}^{\mu\nu} (\widetilde{\triangle^{-1}})^i[\hat{x}_L]
\end{equation}
where $\underset{n}{B}{}_{L,i}^{\mu\nu}$ is taken to be zero if $n<2$, a convention that has enabled the summation over $i$ to be extended to infinity.

Equation (\ref{3.24}) may be formally re-summed by putting it back into (\ref{3.3}) to give
\begin{equation}\label{3.25}
  \bar{h}^{\mu\nu} = \frac{16\uppi G}{c^4}\widetilde{\mathcal{I}^{-1}}[\bar{\tau}^{\mu\nu}] + \sum_{l=0}^\infty \sum_{i=0}^\infty c^{-2i}\partial_t^{2i} B_L^{\mu\nu}(\widetilde{\triangle^{-1}})^i[\hat{x}_L]
\end{equation}
where
\begin{equation}\label{3.26}
  B_L^{\mu\nu}(t) := \sum_{n=2}^\infty c^{-n}\underset{n}{B}{}_L^{\mu\nu}(t)
\end{equation}
and the functional $\widetilde{\mathcal{I}^{-1}}$ is defined by
\begin{equation}\label{3.27}
  \widetilde{\mathcal{I}^{-1}} := \sum_{i=0}^\infty c^{-2i}\partial_t^{2i} \widetilde{\triangle^{-i-1}}.
\end{equation}

This functional is the `operator of instantaneous potentials' defined by (2.20) of \cite{PB:2002} but it differs from that of (3.10) of that paper, which uses $(\widetilde{\triangle^{-1}})^{i+1}$ in place of $\widetilde{\triangle^{-i-1}}$.  The notation here is of course that of the present paper, as the two functionals were not distinguished in \cite{PB:2002} in the belief that they were identical.  The important point for present purposes is that it was (3.10) of \cite{PB:2002} that was used there for the post-Newtonian expansion, whereas here that expansion has been developed using the form given by (2.20) of \cite{PB:2002}.  Note that both forms give valid expressions for $\bar{h}^{\mu\nu}$ but for the same solution they will give different values for the parameters $B_L^{\mu\nu}$.

The final transformation of (\ref{3.25}) follows that of \cite{PB:2002} precisely.  Since $\hat{x}_L = r^l \hat{n}_L$, it follows from (\ref{3.9}) that
\begin{equation}\label{3.28}
  (\widetilde{\triangle^{-1}})^i[\hat{x}_L] = \frac{(2l+1)!!}{(2i)!!(2l+2i+1)!!}\,r^{2i}\hat{x}_L.
\end{equation}
It also follows from Lemma \ref{lem6} of Appendix \ref{PI} that
\begin{equation}\label{3.29}
  \hat{\partial}_L r^{2l+2i} = \frac{(2l+2i)!!}{(2i)!!}\,r^{2i}\hat{x}_L
\end{equation}
and hence
\begin{equation}\label{3.30}
  (\widetilde{\triangle^{-1}})^i[\hat{x}_L] = \frac{(2l+1)!!}{(2l+2i+1)!}\,\hat{\partial}_L r^{2l+2i}
\end{equation}
which enables (\ref{3.25}) to be put in the form
\begin{equation}\label{3.31}
  \bar{h}^{\mu\nu} = \frac{16\uppi G}{c^4}\widetilde{\mathcal{I}^{-1}}[\bar{\tau}^{\mu\nu}] + \sum_{l=0}^\infty \hat{\partial}_L \left\{\frac{1}{r}\bar{A}^{\mu\nu}(r,t) \right\}
\end{equation}
where
\begin{equation}\label{3.32}
  \bar{A}^{\mu\nu}(r,t) =  \sum_{i=l}^\infty \frac{1}{(2i+1)!} \left(\frac{r}{c}\right)^{2i+1} \partial_t^{2i-2l}\left( (2l+1)!! c^{2l+1} B_L^{\mu\nu} \right).
\end{equation}

Let $A_L^{\mu\nu}(t)$ be any $(2l+1)$-fold antiderivative of $(-1)(2l+1)!! c^{2l+1} B_L^{\mu\nu}(t)$. Consider the expression
\begin{equation}\label{3.33}
  \left(\overline{A_L^{\mu\nu}(t-r/c) - A_L^{\mu\nu}(t+r/c)}\right)/2
\end{equation}
where the overline denotes the formal infinite post-Newtonian expansion in powers of $(1/c)$.  In the Taylor expansions of the two terms on the right of (\ref{3.33}), the terms with even powers of $(1/c)$ cancel.  The odd terms with powers of $(2l+1)$ and more give precisely $\bar{A}^{\mu\nu}(r,t)$.  If $\bar{A}^{\mu\nu}(r,t)$ in (\ref{3.31}) is replaced by (\ref{3.33}), the additional terms in the operand of $\hat{\partial}_L$ are therefore even powers of $r$ less than $2l$.  By \eqref{A8} $\hat{\partial}_L$ maps these terms to zero. The substitution therefore leaves (\ref{3.31}) unchanged, putting the general solution for $\bar{h}^{\mu\nu}$ in the form
\begin{equation}\label{3.34}
  \bar{h}^{\mu\nu} = \frac{16\uppi G}{c^4}\widetilde{\mathcal{I}^{-1}}[\bar{\tau}^{\mu\nu}] + \sum_{l=0}^\infty \hat{\partial}_L \left\{\frac{\overline{A_L^{\mu\nu}(t-r/c) - A_L^{\mu\nu}(t+r/c)}}{2r}\right\}.
\end{equation}

This is the final form of the solution of (\ref{3.1}) as a post-Newtonian expansion.  It is not an explicit solution, but instead an equation to be solved by iteration for $\bar{h}^{\mu\nu}$, commencing with $\bar{h}^{\mu\nu}=0$.  The functions $\bar{\tau}^{\mu\nu}(\bx,t)$ and  $A_L^{\mu\nu}(t)$ will in general depend on $\bar{h}^{\mu\nu}$, for which the value from the previous iteration is to be used.

\section{The post-Minkowskian expansion}
There is no error in the treatment in \cite{PB:2002} of the post-Minkowskian expansion.  However, as it too involves the operator of instantaneous potentials $\widetilde{\mathcal{I}^{-1}}$, it needs to be presented in summary here to demonstrate that it is consistent only with the definition (\ref{3.27}) of this paper and not with the alternative definition that was considered in \cite{PB:2002} to be equivalent.

The post-Minkowskian expansion is applied only to the field $h_\ext^{\mu\nu}$ outside the material source, where $T^{\mu\nu} = 0$.  A boundary condition can therefore be applied at infinity but not at the origin $r=0$.  In this region (\ref{2.4}) and (\ref{2.5}) give
\begin{equation}\label{4.1}
  \Box h_\ext^{\mu\nu} = \Lambda_\ext^{\mu\nu}.
\end{equation}
A condition of no incoming radiation is applied by brute force by requiring the source and its solution to be past-stationary, taken as meaning independent of $t$ for $t<-T$ for some constant $T$.  The expansion here is in powers of $G$, taken as
\begin{equation}\label{4.2}
  h_\ext^{\mu\nu} = \sum_{n=1}^\infty G^n h^{\mu\nu}_{(n)}
\end{equation}
with a corresponding expansion of $\Lambda_\ext^{\mu\nu}$ as
\begin{equation}\label{4.3}
  \Lambda_\ext^{\mu\nu} = \sum_{n=2}^\infty G^n \Lambda^{\mu\nu}_{(n)}.
\end{equation}
This starts at $n=2$ since $\Lambda^{\mu\nu}$ is at least quadratic in $h^{\mu\nu}$.  It follows that
\begin{equation}\label{4.4}
  \Box h^{\mu\nu}_{(n)} = \Lambda^{\mu\nu}_{(n)}
\end{equation}
for $n\geq 1$, with the convention that $\Lambda^{\mu\nu}_{(1)} = 0$.  Each term in (\ref{4.2}) is separately independent of $t$ for $t<-T$.

The equation for $n=1$ is just the vacuum wave equation.  It is shown in \cite{BD:1986} that the general solution of this equation that is both independent of time for $t<-T$ and is vanishing at spatial infinity (the condition of asymptotic flatness) is expressible as the spherical harmonic expansion
\begin{equation}\label{4.5}
  h^{\mu\nu}_{(1)}(\bx,t) = \sum_{l=0}^\infty \hat{\partial}_L \left( \frac{X^{\mu\nu}_{(1)L}(t-r/c)}{r}\right),
\end{equation}
where $r=|\bx|$, for some STF tensor functions $X^{\mu\nu}_{(1)L}(t)$ that are constant for $t<-T$. These functions describe, and are determined by, the field outside the source.  However, if they are known then (\ref{4.5}) can be regarded as the \emph{definition} of a function $h^{\mu\nu}_{(1)}(\bx,t)$ for all $r>0$, even within the region occupied by the source.  This in turn determines $\Lambda^{\mu\nu}_{(2)}(\bx,t)$ for all $r>0$ and so gives a meaning to (\ref{4.4}) for $n=2$ throughout this region.

If it converged then one particular solution for $h^{\mu\nu}_{(2)}(\bx,t)$ would be the retarded integral
\begin{equation}\label{4.6}
  h^{\mu\nu}_{(2)}(\bx,t) = \Box_{\mathrm{Ret}}^{-1}[\Lambda^{\mu\nu}_{(2)}](\bx,t) := -\frac{1}{4\uppi}\int \frac{\dif^3 \by}{|\bx-\by|}\,\Lambda^{\mu\nu}_{(2)}\big(\by, (t-|\bx-\by|/c)\big).
\end{equation}
There is no problem with convergence at infinity but there is at $r=0$ since the contribution to (\ref{4.5}) for a particular $l$ diverges as $r^{-l-1}$ as $r \rightarrow 0$.  To handle this, \cite{PB:2002} and \cite{BD:1986} suppose that $X^{\mu\nu}_{(1)L} = 0$ for sufficiently large $l$.  This has no practical effect on the results obtained as they are independent of this value, which can be arbitrarily large.  It means, however, that $h^{\mu\nu}_{(1)}$ and so also $\Lambda^{\mu\nu}_{(2)}$ diverge as $r \rightarrow 0$ no faster than some negative power of $r$.

This allows a similar regularization process to be performed to that in Section \ref{PN}, with the particular solution taken as
\begin{equation}\label{4.7}
  h^{\mu\nu}_{(2)} = \widetilde{\Box_{\mathrm{Ret}}^{-1}}[\Lambda^{\mu\nu}_{(2)}] := \FP\limits_{B=0}\Box_{B\,\mathrm{Ret}}^{-1}[\tilde{r}^B \Lambda^{\mu\nu}_{(2)}]
\end{equation}
where again $\tilde{r}=r/r_0$ and $r_0 >0$ is the regularization parameter.  Due to the use of the retarded integral, this is constant for $t<-T$.  The general solution is therefore given by adding the general solution of the homogeneous equation constant for $t<-T$, already seen to have the form (\ref{4.5}).  This can be continued to higher values of $n$, so giving
\begin{equation}\label{4.8}
  h^{\mu\nu}_{(n)}(\bx,t) = \widetilde{\Box_{\mathrm{Ret}}^{-1}}[\Lambda^{\mu\nu}_{(n)}](\bx,t)
  + \sum_{l=0}^\infty \hat{\partial}_L \left( \frac{X^{\mu\nu}_{(n)L}(t-r/c)}{r}\right)
\end{equation}
where for each $n$, $X^{\mu\nu}_{(n)L}$ is zero for sufficiently large $l$.  Due to the nonlinearity of the construction of $\Lambda^{\mu\nu}$, however, the value of $l$ beyond which this happens is a function of $n$ that tends to infinity with $n$.

Each of the functions $h^{\mu\nu}_{(n)}$ is defined throughout $r>0$, so (\ref{4.2}) and (\ref{4.3}) can now be used to extend the domain of definition of $h_\ext^{\mu\nu}$ and $\Lambda_\ext^{\mu\nu}$ throughout this region, even within the material source.  With this extension, (\ref{4.8}) can be formally re-summed over $n$ to give
\begin{equation}\label{4.11}
  h_\ext^{\mu\nu}(\bx,t) = \widetilde{\Box_{\mathrm{Ret}}^{-1}}[\Lambda_\ext^{\mu\nu}](\bx,t)
  + \sum_{l=0}^\infty \hat{\partial}_L \left( \frac{X^{\mu\nu}_L(t-r/c)}{r}\right)
\end{equation}
where
\begin{equation}\label{4.12}
X^{\mu\nu}_L(t) := \sum_{n=1}^\infty G^n X^{\mu\nu}_{(n)L}(t).
\end{equation}
This is the equivalent for the post-Minkowskian expansion of \eqref{3.34} for the post-Newtonian one.

\section{Multipole expansion\label{multi}}

Matching between the two expansions is achieved by expressing both in multipole form.  Throughout this paper, $\mathcal{M}$ will denote the operation of expansion in a multipole series. The operator is used with varying meaning in \cite{PB:2002}, which is a potentially misleading; see the note in Appendix \ref{PI} following \eqref{C3b}.

Multipole expansion of the post-Newtonian result \eqref{3.34} consists simply in applying $\mathcal{M}$ to those terms not already in multipole form and using \eqref{C3b} to give
\begin{equation}\label{5.1}
  \mathcal{M}(\bar{h}^{\mu\nu}) = \frac{16\uppi G}{c^4}\widetilde{\mathcal{I}^{-1}} [\mathcal{M}(\bar{\tau}^{\mu\nu})] + \sum_{l=0}^\infty \hat{\partial}_L \left\{\frac{\overline{A_L^{\mu\nu}(t-r/c) - A_L^{\mu\nu}(t+r/c)}}{2r}\right\}.
\end{equation}
That of the post-Minkowskian result \eqref{4.11} is substantially more complicated.  Let
\begin{equation}\label{5.2}
  \mathcal{M}(\Lambda_\ext^{\mu\nu})(\bx,t) = \sum_{l=0}^\infty \hat{n}_L \Lambda_L^{\mu\nu}(r,t)
\end{equation}
be the multipole expansion of $\Lambda_\ext^{\mu\nu}$ and from it construct the integrated quantities
\begin{equation}\label{5.3}
  R_{L,B}^{\mu\nu}(x,t) := \frac{c^4}{16\uppi G}\,x^l \int_0^x \dif y\, \frac{(x-y)^l}{l!}\,\tilde{y}^B\left(\frac{2}{y}\right)^{l-1}\Lambda_L^{\mu\nu}\left(y,t+\tfrac{y}{c}\right)
\end{equation}
and
\begin{equation}\label{5.4}
  R_L^{\mu\nu}(t) := \frac{c^4}{4G}\, (-1)^{l+1}\FP\limits_{B=0} \int_0^\infty \dif y\,\tilde{y}^B\,y^{l+2} \int_1^\infty \dif z\, \left(\frac{z^2-1}{2}\right)^l \Lambda_L^{\mu\nu}\left(y,t-\tfrac{yz}{c}\right).
\end{equation}
The numerical factors at the beginning are for agreement with \cite{PB:2002}.  The factors $\tilde{y}^B$ arise from regularization, where $\tilde{y}=y/r_0$ and $r_0$ is the regularization parameter.  Then it has been shown by Blanchet and Damour \cite{BD:1986} that
\begin{equation}\label{5.5}
  \mathcal{M}(\widetilde{\Box_{\mathrm{Ret}}^{-1}}[\Lambda_\ext^{\mu\nu}]) = \frac{16\uppi G}{c^4}\big(S_1^{\mu\nu} + S_2^{\mu\nu}\big)
\end{equation}
where
\begin{equation}\label{5.6}
  S_1^{\mu\nu}(\bx,t) := \FP\limits_{B=0} \int_{-r}^r \dif s\,\sum_{l=0}^\infty \hat{\partial}_L \left\{\frac{R_{L,B}^{\mu\nu}\left(\tfrac{s+r}{2},t-\tfrac{s}{c}\right)}{r}\right\}
\end{equation}
and
\begin{equation}\label{5.7}
  S_2^{\mu\nu}(\bx,t) := - \frac{1}{4\uppi} \sum_{l=0}^\infty \frac{(-1)^l}{l!} \hat{\partial}_L \left\{\frac{R^{\mu\nu}_L\left(t-\tfrac{r}{c}\right) -R^{\mu\nu}_L\left(t+\tfrac{r}{c}\right)}{2r}\right\}.
\end{equation}
The multipole expansion of \eqref{4.11} is therefore
\begin{equation}\label{5.8}
  M(h_\ext^{\mu\nu}) = \frac{16\uppi G}{c^4}\big(S_1^{\mu\nu} + S_2^{\mu\nu}\big) + S_3^{\mu\nu}
\end{equation}
where
\begin{equation}\label{5.9}
  S_3^{\mu\nu}(\bx,t) = \sum_{l=0}^\infty \hat{\partial}_L \left( \frac{X^{\mu\nu}_L(t-r/c)}{r}\right)
\end{equation}

\section{Matching\label{match}}
In addition to the stress-energy tensor $T^{\mu\nu}$ of the source, the two equations \eqref{5.1} and \eqref{5.8} involve the unknown multipole functions $A_L^{\mu\nu}(t)$ and $X_L^{\mu\nu}(t)$.  These  equations represent different expansions of the same one-parameter family of solutions of the field equations \eqref{2.4}, as described in Section \ref{prelim}.  The aim of matching is to determine the functions $A_L^{\mu\nu}(t)$ and $X_L^{\mu\nu}(t)$ from this fact, by comparing the two expansions in a common region of validity.

Equation \eqref{5.1} represents a post-Newtonian expansion of the solution both inside and outside the source as a series asymptotic in the limit $r \to 0$.  Equation \eqref{5.8}, on the other hand, represents a field defined \emph{throughout} $r>0$ that agrees with the required solution everywhere outside the source.  \emph{It can therefore itself be expanded about $r=0$ as a post-Newtonian expansion} and the two post-Newtonian expansions should be able to be put in identical forms in their common region of validity, namely outside the source, by a suitable choice of $A^{\mu\nu}(t)$ and $X^{\mu\nu}(t)$.  It should be no surprise that this is possible, as it is not the matching of two different expansions but of one expansion arrived at in two very different ways.

The presentation in \cite{PB:2002} discusses the matching as being between two asymptotic expansions, a post-Newtonian one valid in the limit $r \to 0$ and a post-Minkowskian one valid in the limit $r \to \infty$, so it becomes a surprise that the terms not involved in the matching, which contain much of the complication of the Einstein field equations, are the same in both expansions.  This appears to be a misreading of the situation.

Two steps are needed before the matching can be performed as described.  Equation \eqref{5.1} needs to be simplified to a form specific to the exterior of the material source and \eqref{5.8} needs to be given a post-Newtonian expansion.

The simplification of \eqref{5.1} uses the results of Appendix \ref{PI} and needs to take into account that $\tau^{\mu\nu}$ is given by \eqref{2.5}. Let $\mathcal{R}$, the radius at which the integration is split for the regularization process, be chosen so that the material source lies entirely within the sphere $r=\mathcal{R}$.  Following \eqref{C0}, let $\mathcal{M} (\bar{\tau}^{\mu\nu})$ be expanded as
\begin{equation}\label{6.1}
  \mathcal{M}(\bar{\tau}^{\mu\nu})(\bx,t) = \sum_{l=0}^\infty\hat{n}_L \bar{\sigma}_L^{\mu\nu}(r,t).
\end{equation}
The first iteration of \eqref{5.1} has $\Lambda^{\mu\nu} = 0$, giving $\bar{\sigma}_L^{\mu\nu}(r,t) = 0$ for $r>\mathcal{R}$.  Then \eqref{C3} shows that throughout the region $|\bx|>\mathcal{R}$, $\triangle^{-k-1}[M(\bar{\tau}^{\mu\nu})]$ will  be a sum of terms of the form $\hat{n}_L r^a f_L^{\mu\nu}(t)$ with $a \in \mathbb{Z}$ and $a < 2k$.  It follows from $\eqref{5.1}$ that in this region, the iteration results in a value for $\mathcal{M}(\bar{h}^{\mu\nu})$ of the same form, with $a$ bounded above at any post-Newtonian level.  This will therefore also hold for the value of $\mathcal{M}(\bar{\Lambda}^{\mu\nu})$, and hence that of $\mathcal{M}(\bar{\tau}^{\mu\nu})$, to be used in $|\bx|>\mathcal{R}$ for the next iteration.

It will be seen that subsequent iterations lead also to terms with powers of $\log r$.  This more general form can be expressed as
\begin{equation}\label{6.2}
  \mathcal{M}(\bar{\tau}_\ext^{\mu\nu})(\bx,t) = \sum_{l=0}^\infty\hat{n}_L \bar{\tau}_L^{\mu\nu}(r,t)
\end{equation}
with
\begin{equation}\label{6.3}
  \bar{\tau}_L^{\mu\nu}(r,t) = \sum_{a,p} r^a (\log r)^p f_{L,a,p}^{\mu\nu}(t)
\end{equation}
where $a \in \mathbb{Z}$ and $p \in \mathbb{N}$.  A distinction has been made between $\bar{\tau}^{\mu\nu}$ and $\bar{\tau}_\ext^{\mu\nu}$ since although they are identical in $|\bx|>\mathcal{R}$, the expression \eqref{6.2} gives a meaning to $\bar{\tau}_\ext^{\mu\nu}$ also for $0<|\bx|<\mathcal{R}$, where the two will differ.

The simplification of \eqref{5.1} consists in expressing $\widetilde{\mathcal{I}^{-1}} [\mathcal{M}(\bar{\tau}^{\mu\nu})]$ in terms of $\widetilde{\mathcal{I}^{-1}} [\mathcal{M}(\bar{\tau}_\ext^{\mu\nu})]$ for $|\bx| > \mathcal{R}$.  The reason that this is a simplification is that the latter is more easily evaluated with use of the results of Appendix \ref{PI}.

It follows from \eqref{C3} that
\begin{multline}\label{6.4}
   \triangle^{-k-1}[\tilde{r}^B \mathcal{M}(\bar{\tau}^{\mu\nu}- \bar{\tau}_\ext^{\mu\nu})](\bx,t)
   = - \sum_{l=0}^\infty \frac{\hat{n}_L}{(2k)!\,(2l+1)}\\ \times \sum_{i=0}^k a^{(k)}_{l,i} r^{2i-l-1} \int_0^\mathcal{R} \dif y\,\big(\bar{\sigma}_L^{\mu\nu}(y,t)-\bar{\tau}_L^{\mu\nu}(y,t)\big)\tilde{y}^B y^{l+2(k-i+1)},
\end{multline}
all other contributions being zero since $\bar{\sigma}_L^{\mu\nu}-\bar{\tau}_L^{\mu\nu}$ vanishes in $r>R$.  Now from \eqref{6.3},
\begin{multline}\label{6.5}
  \int_0^\mathcal{R} \dif y\,\bar{\tau}_L^{\mu\nu}(y,t)  \tilde{y}^B y^{l+2(k-i+1)}\\
  = \sum_{a,p} \frac{\partial^p}{\partial B^p}\Bigg\{ \frac{r_0^{-B}\mathcal{R}^{B+a+l+2(k-i)+3}} {B+a+l+2(k-i)+3} \Bigg\} f_{L,a,p}^{\mu\nu}(t) \\
  = - \int_\mathcal{R}^\infty \dif y\,\bar{\tau}_L^{\mu\nu}(y,t)  \tilde{y}^B y^{l+2(k-i+1)}
\end{multline}
where the equalities each hold for the range of $\Re(B)$ for which the integrals converge.  The two ranges do not overlap, but the extended ranges resulting from analytic continuation are identical.

It follows that in the analytic continuation of \eqref{6.4}, the first integral of \eqref{6.5} can be replaced by the second one. Since $\bar{\tau}_L^{\mu\nu} = \bar{\sigma}_L^{\mu\nu}$ in $r>\mathcal{R}$, this gives
\begin{multline}\label{6.6}
   \triangle_B^{-k-1}[\tilde{r}^B \mathcal{M}(\bar{\tau}^{\mu\nu}- \bar{\tau}_\ext^{\mu\nu})](\bx,t)
   = - \sum_{l=0}^\infty \frac{\hat{n}_L}{(2k)!\,(2l+1)}\\ \times \sum_{i=0}^k a^{(k)}_{l,i} r^{2i-l-1} \int_0^\infty \dif y\,\bar{\sigma}_L^{\mu\nu}(y,t)\tilde{y}^B y^{l+2(k-i+1)}
\end{multline}
where analytic continuation of the right hand side is implicit.  Now from \eqref{A8} and the reflection formula for the $\Gamma$-function
\begin{equation}\label{6.7}
  \hat{n}_L r^{2i-l-1} = (-1)^{l+i}\,2^{2i-l}\,i!\,\frac{\Gamma(\tfrac{1}{2})}{\Gamma(l-i+\tfrac{1}{2})}\, \hat{\partial}_L \left(\frac{r^{2i-1}}{(2i)!}\right)
\end{equation}
while from \eqref{6.1} and \eqref{A11}
\begin{equation}\label{6.8}
  \bar{\sigma}^{\mu\nu}_L(y,t) = \frac{(2l+1)!!}{4\uppi\,l!}\int \dif\Omega' \, \bar{\tau}^{\mu\nu}(y\bn',t)\,\hat{n}'_L.
\end{equation}
This can be used to convert the integral in \eqref{6.6} into one over the three-dimensional space of $\by \equiv y\bn'$.  Since $\hat{y}_L = y^l\, \hat{n}'_L$, it follows from \eqref{3.9} and \eqref{3.9a} that
\begin{equation}\label{6.9}
   y^{l+2(k-i)}\hat{n}'_L = 2^{l+2k-2i+1} \, \frac{(k-i)!}{(2l+1)!!}\, \frac{\Gamma(l+k-i+\tfrac{3}{2})}{\Gamma(\tfrac{1}{2})}\,
   \widetilde{\triangle^{i-k}}[\hat{y}_L]
\end{equation}
when $i<k$.  Putting this all together and using the value of $a^{(k)}_{l,i}$ from \eqref{A3b} gives
\begin{multline}\label{6.10}
  \triangle_B^{-k-1}[\tilde{r}^B \mathcal{M}(\bar{\tau}^{\mu\nu}- \bar{\tau}_\ext^{\mu\nu})](\bx,t)\\
  = - \frac{1}{4\uppi}\sum_{l=0}^\infty \frac{(-1)^l}{l!}\,\hat{\partial}_L \Bigg\{\frac{1}{r}\sum_{i=0}^k \frac{r^{2i}}{(2i)!}\int \dif^3 \by\, \bar{\tau}^{\mu\nu}(\by,t)\tilde{y}^B \widetilde{\triangle^{i-k}}[\hat{y}_L]\Bigg\}.
\end{multline}

Now take the finite part and use \eqref{3.27}.  Exchange the order of summations over $k$ from \eqref{3.27} and $i$ from \eqref{6.10}.  The summation over $i$ then becomes the even terms of a Taylor expansion in powers of $r/c$.  On denoting post-Newtonian expansion as usual by an overline, the result can be put in the form
\begin{equation}\label{6.11}
  \widetilde{\mathcal{I}^{-1}}[\mathcal{M}(\bar{\tau}^{\mu\nu}- \bar{\tau}_\ext^{\mu\nu})](\bx,t)
  = - \frac{1}{4\uppi}\sum_{l=0}^\infty \frac{(-1)^l}{l!}\,\hat{\partial}_L
  \Bigg\{\frac{\overline{F_L^{\mu\nu}(t-r/c)+F_L^{\mu\nu}(t+r/c)}}{2r}\Bigg\}
\end{equation}
where
\begin{equation}\label{6.12}
  F_L^{\mu\nu}(t) := \sum_{k=0}^\infty \frac{1}{c^{2k}}\FP\limits_{B=0} \int \dif^3 \by\,\tilde{y}^B \widetilde{\triangle^{-k}}[\hat{y}_L]\, \partial_t^{2k}\bar{\tau}^{\mu\nu}(\by,t).
\end{equation}
This is equivalent to equation (C15) of \cite{PB:2002} but the derivation is of necessity different.  The derivation in \cite{PB:2002} is iterative and relies on the interpretation of $\widetilde{\triangle^{-k-1}}$ as $(\widetilde{\triangle^{-1}})^{k+1}$, which has been shown here to be flawed.  Equation \eqref{6.12} finally enables \eqref{5.1}, in a region $|\bx|>\mathcal{R}$ outside the source, to be expressed as
\begin{multline}\label{6.13}
  \mathcal{M}(\bar{h}_\ext^{\mu\nu}) = \frac{16\uppi G}{c^4}\widetilde{\mathcal{I}^{-1}} [\mathcal{M}(\bar{\tau}_\ext^{\mu\nu})]\\
   - \frac{4G}{c^4}\sum_{l=0}^\infty \frac{(-1)^l}{l!}\,\hat{\partial}_L
  \Bigg\{\frac{\overline{F_L^{\mu\nu}(t-r/c)+F_L^{\mu\nu}(t+r/c)}}{2r}\Bigg\} \\
    +\sum_{l=0}^\infty \hat{\partial}_L \left\{\frac{\overline{A_L^{\mu\nu}(t-r/c) - A_L^{\mu\nu}(t+r/c)}}{2r}\right\}.
\end{multline}
Since this can be solved throughout $r>0$, its solution has been denoted by $\bar{h}_\ext^{\mu\nu}$ to indicate that it differs from $\bar{h}^{\mu\nu}$ within the source.

To complete the proof of \eqref{6.13}, it remains to verify that at each iteration the input value of $\mathcal{M}(\bar{\tau}_\ext^{\mu\nu})$ does have the form given by \eqref{6.2} and \eqref{6.3}.  It has already been shown that this is true of the second iteration, with the terms in $\log r$ being absent.  It was seen in Section \ref{PN} that the action of $\widetilde{\triangle^{-k-1}}$ on such terms can generate terms with the $\log r$ factor.  The other terms in $\mathcal{M}(\bar{h}^{\mu\nu})$, arising from the post-Newtonian expansions of $F_L^{\mu\nu}$ and $A_L^{\mu\nu}$, will have only integer powers of $r$.  However, overall, this will lead to an input value of $\mathcal{M} (\bar{\tau}_\ext^{\mu\nu})$ for the next iteration of the form \eqref{6.2} and \eqref{6.3} that does include powers of $\log r$.  Further action of $\widetilde{\triangle^{-k-1}}$ may increase the power of $\log r$ but still preserve these forms, which are therefore the most general forms that can occur.

This completes the treatment of the post-Newtonian expansion so it remains to consider the post-Minkowskian one.  As described above, what is needed here is to expand \eqref{5.8} itself in  post-Newtonian form.  For the terms $S_2^{\mu\nu}$ and $S_3^{\mu\nu}$ of \eqref{5.7} and \eqref{5.9}, this consists simply of expanding them in powers of $r/c$, which will be denoted once again by an overline. It is only $S_1^{\mu\nu}$ that needs special treatment.

The expansion of this term has been obtained in both \cite{PB:2002} and \cite{B:1993} but both treatments, especially the former, involve an indirect step in which two solutions of a wave equation (in the case of \cite{PB:2002}) or an iterated Poisson equation (in the case of \cite{B:1993}) are identified as equal through both being proportional to the regularization factor $r^B$. The surprisingly simple result is that
\begin{equation}\label{6.14}
  \frac{16\uppi G}{c^4}\,\overline{S_1^{\mu\nu}} = \widetilde{\mathcal{I}^{-1}}[\overline{\mathcal{M}(\Lambda_\ext^{\mu\nu})}].
\end{equation}
Since the result involves the operator of instantaneous potentials that lies at the centre of the error in \cite{PB:2002}, to avoid all possible shadow of doubt a direct derivation of (\ref{6.14}) is given in Appendix \ref{PNOS}.  With this result, the complete post-Newtonian expansion of the post-Minkowskian solution takes the form
\begin{multline}\label{6.15}
  \overline{\mathcal{M}(h_\ext^{\mu\nu})} = \widetilde{\mathcal{I}^{-1}}[\overline{\mathcal{M}(\Lambda_\ext^{\mu\nu})}] \\
  - \frac{4G}{c^4} \sum_{l=0}^\infty \frac{(-1)^l}{l!} \hat{\partial}_L \left\{\frac{\overline{R^{\mu\nu}_L(t-r/c) -R^{\mu\nu}_L(t+r/c)}}{2r}\right\} \\
  + \sum_{l=0}^\infty \hat{\partial}_L \left( \frac{\overline{X^{\mu\nu}_L(t-r/c)}}{r}\right).
\end{multline}

Both \eqref{6.13} and \eqref{6.15} are expansions that are asymptotic in the limit $r \to 0$, valid only outside the source, but in this region they are alternative forms for the same expansion and so should be identical.  This can be expressed simply as the requirement
\begin{equation}\label{6.16}
  \mathcal{M}(\bar{h}_\ext^{\mu\nu}) = \overline{\mathcal{M}(h_\ext^{\mu\nu})}
\end{equation}
which is referred to in \cite{PB:2002} as the matching condition.

It follows from \eqref{2.5} that outside the source,
\begin{equation}\label{6.17}
  \overline{\mathcal{M}(\Lambda_\ext^{\mu\nu})} = \frac{16\uppi G}{c^4}\, \mathcal{M}(\bar{\tau}_\ext^{\mu\nu})
\end{equation}
as it is immaterial whether the multipole or the post-Newtonian expansion is performed first.  It is important, however, that this equality holds throughout $r>0$, even within the source.  This is required since $\widetilde{\mathcal{I}^{-1}}$ is a functional, not a function.  Indeed, if this were not the case then both sides of \eqref{6.11} would be zero outside the source.

The definitions of both sides of \eqref{6.17} have been extended to the whole of the region $r>0$, but in substantially different ways.  On the right it is extended by adopting \eqref{6.3} throughout the region, where the coefficients $f_{L,a,p}^{\mu\nu}(t)$ retain the values valid outside the source.  On the left it is extended by taking \eqref{4.5} to hold throughout $r>0$, where $X_{(1)L}^{\mu\nu}(t)$ is determined from the field outside the source.  It is shown by \eqref{B30} of  Appendix \ref{PNOS}, however, that the left hand side also has the form given by \eqref{6.2} and \eqref{6.3}, so both extensions are equivalent.  Hence \eqref{6.17} holds throughout $r>0$, as required.

The first terms on the right of each of \eqref{6.13} and \eqref{6.15} are therefore identical.  Matching then requires the remaining terms to be so as well, giving
\begin{equation}\label{6.18}
  A_L^{\mu\nu}(t) = - \frac{4G}{c^4}\,\frac{(-1)^l}{l!}\big(F_L^{\mu\nu}(t)+R_L^{\mu\nu}(t)\big),
\end{equation}
\begin{equation}\label{6.19}
  X_L^{\mu\nu}(t) = - \frac{4G}{c^4}\,\frac{(-1)^l}{l!}\,F_L^{\mu\nu}(t).
\end{equation}
This is in apparent agreement with (4.5) and (4.6) of \cite{PB:2002}, where the results are obtained from comparison of equations (4.2) and (4.3) of that paper.  However, those two equations were derived with differing definitions of $\widetilde{\mathcal{I}^{-1}}$ from one another, corresponding to equations \eqref{3.14} and \eqref{3.15} of the present paper.  This results in those two terms having values that differ by a solution of the homogeneous equation, so the matching results presented there are invalid.

Note that the matching process requires the same value of the regularization parameter $r_0$ to be used throughout.  The value used in the post-Minkowskian equation \eqref{4.11} becomes that used in \eqref{5.3} and \eqref{5.4}, in particular in $R_L^{\mu\nu}(t)$, and as seen in Appendix \ref{PNOS}, also in the first term on the right of \eqref{6.15}.  Matching requires this to be used in all the operations that comprise $\widetilde{\mathcal{I}^{-1}}$ in \eqref{6.13} and the development in Section \ref{match} shows this also to be the value used in $F_L^{\mu\nu}(t)$.  These individual contributions will in general all depend on $r_0$, but this dependence will cancel to give a final solution for $h^{\mu\nu}$ that is independent of the choice made.

The functions on the right of \eqref{6.18} and \eqref{6.19} are determined by \eqref{6.12} and \eqref{5.4} in terms of the field and source variables.  With these results, the two equations \eqref{6.13} and \eqref{6.15} therefore unify into a single equation that has no indeterminate functions and which can be solved by iteration.  There remains one consistency check to be performed, however.  This is to verify that the solution it yields does have the form \eqref{3.3} assumed at the outset.

The only question about this concerns the function $R_L^{\mu\nu}(t)$, as all other contributions have been seen to have expansions that are in powers of $1/c$. Now it follows from \eqref{B30} and \eqref{5.2} that the post-Newtonian expansion $\bar{\Lambda}_L^{\mu\nu}$ of $\Lambda_L^{\mu\nu}$ is a sum of terms of the form $r^a (\log r)^p\, G^{\mu\nu}_{L,a,p}(t)$, so consider the contribution to the corresponding expansion of $R_L^{\mu\nu}(t)$ from one such term.  As usual the result for $p>0$ can be obtained from that for $p=0$ by differentiating with respect to the regularization variable $B$, so take $\bar{\Lambda}_L^{\mu\nu}(r,t) = r^a G^{\mu\nu}_L(t)$.  Then \eqref{5.4} gives
\begin{equation}\label{6.20}
  \overline{R_L^{\mu\nu}}(t) = \frac{c^4}{4G}\, (-1)^{l+1}\FP\limits_{B=0} \int_0^\infty \dif y\,\tilde{y}^B\,y^{l+a+2} \int_1^\infty \dif z\, \left(\frac{z^2-1}{2}\right)^l G_L^{\mu\nu}(t-\tfrac{yz}{c}).
\end{equation}
If the variables of integration are changed from $(y,z)$ to $(u,v)$ with $y=cuv$, $z=v^{-1}$, the result factorizes to give
\begin{multline}\label{6.21}
  \overline{R_L^{\mu\nu}}(t) = \frac{c^{7+a+l}}{4G}\,(-1)^{l+1}\FP\limits_{B=0}\left(\frac{c}{r_0}\right)^B \bigg\{ \int_0^\infty \dif u\,u^{B+a+2+l} G_L^{\mu\nu}(t-u)\bigg\} \\
  \times \bigg\{\int_0^1 \dif v\, v^{B+a+1-l}\left(\frac{1-v^2}{2}\right)^l \bigg\}.
\end{multline}

It is only derivatives of $R_L^{\mu\nu}$ that occur in \eqref{6.15} as the first, and all other even-order, terms in its Taylor expansion cancel.  Since $G_L^{\mu\nu}(t)$ is constant for sufficiently large negative $t$ and only its derivative is relevant, there is no convergence problem in the integral limit $u \to \infty$.  Both integrals converge at their lower limit for sufficiently large $\Re(B)$, but considered as functions of $B$ their analytic continuations will in general have a pole at $B=0$.  The factor of $c^B$ will therefore give rise to factors of $\log c$ in the finite part.  The highest integer power of $1/c$ is not clear from \eqref{6.21}, but \eqref{5.4} shows it to be $-2$ since that for $\Lambda^{\mu\nu}$ is $+2$.  Equation \eqref{6.15} therefore shows that $\overline{R_L^{\mu\nu}}$ gives contributions to $\bar{h}^{\mu\nu}$ whose $c$-dependence has the form $c^{-n}(\log c)^p$ with $n \geq 2$, $p \geq 0$.  This is indeed consistent with \eqref{3.3} as required, but it shows that the terms in that expansion may themselves have a $c$-dependence through $\log c$, as stated there.

\section{Conclusions}
The use of matched asymptotic expansions by Poujade \& Blanchet in \cite{PB:2002} has been revisited.  This has revealed a subtle error in the matching process of that paper.  The results affected by the error have been redeveloped in a corrected form.  This redevelopment has identified why the matching achieved is necessarily exact, a result described in \cite{PB:2002} as non-trivial and somewhat remarkable.  The work of that paper is indeed remarkable in that it has carried both asymptotic expansions to the point at which they can be compared and therefore matched, but once that can be done, they must match exactly.  The redevelopment has also presented a direct proof of a major result from \cite{B:1993}, used and re-derived in \cite{PB:2002}.  Both of these preceding derivations have indirect steps based on a uniqueness argument.  Having a direct proof eliminates the possibility of any further subtle error hidden by this reliance on uniqueness.

The problem considered here and in \cite{PB:2002} is the post-Newtonian approximation to the solution of the Einstein field equations in harmonic coordinates for a bounded source.  The final outcome of the matching procedure is that this approximation can be obtained by iteration of the implicit equation
\begin{equation}\label{7.1}
  \bar{h}^{\mu\nu} = \frac{16\uppi G}{c^4}\widetilde{\mathcal{I}^{-1}}[\bar{\tau}^{\mu\nu}] + \sum_{l=0}^\infty \hat{\partial}_L \left\{\frac{\overline{A_L^{\mu\nu}(t-r/c) - A_L^{\mu\nu}(t+r/c)}}{2r}\right\}
\end{equation}
with
\begin{equation}\label{7.2}
  A_L^{\mu\nu}(t) = - \frac{4G}{c^4}\,\frac{(-1)^l}{l!}\big(F_L^{\mu\nu}(t)+R_L^{\mu\nu}(t)\big),
\end{equation}
\begin{equation}\label{7.3}
  F_L^{\mu\nu}(t) := \sum_{k=0}^\infty \frac{1}{c^{2k}}\FP\limits_{B=0} \int \dif^3 \by\,\tilde{y}^B \widetilde{\triangle^{-k}}[\hat{y}_L]\, \partial_t^{2k}\bar{\tau}^{\mu\nu}(\by,t),
\end{equation}
\begin{equation}\label{7.4}
  R_L^{\mu\nu}(t) := 4\uppi (-1)^{l+1}\FP\limits_{B=0} \int_0^\infty \dif y\,\tilde{y}^B\,y^{l+2} \int_1^\infty \dif z\, \left(\frac{z^2-1}{2}\right)^l \bar{\tau}_L^{\mu\nu} \left(y,t-\tfrac{yz}{c}\right).
\end{equation}
Here $\bar{\tau}^{\mu\nu}$ is a function both of the stress-energy tensor of the material source and of the field variable $\bar{h}^{\mu\nu}$ that is re-calculated at each iteration.   The functions $\bar{\tau}_L^{\mu\nu}$ are determined by \eqref{6.2} and \eqref{6.3} in terms of the value of $\bar{\tau}^{\mu\nu}$ outside the source.

The operator $\widetilde{\mathcal{I}^{-1}}$ is defined by \eqref{3.27}; it is the interpretation of this operator in \cite{PB:2002} that is the source of the error of that paper.  An overline denotes the expansion of the function concerned in powers of $r/c$.   Other notation is defined within the present paper and follows closely that of \cite{PB:2002}.  The form \eqref{3.34} has been used in this summary as it is valid both inside and outside the source, while \eqref{6.13} used for matching is valid only outside the source.  The result \eqref{7.4} is taken from \eqref{5.4} but expressed in terms of $\bar{\tau}_L^{\mu\nu}$ as defined by \eqref{6.2} rather than $\Lambda_L^{\mu\nu}$ as defined by \eqref{5.2}, the two forms being equivalent by \eqref{2.5}.  The function $R_L^{\mu\nu}(t)$ gives rise to the post-Newtonian expansion containing both terms logarithmic in the expansion parameter $1/c$ and tail terms that involve integration over all past time.

A companion paper will apply the results of this paper to a model problem in which logarithmic and tail terms also arise and in which the ``post-Newtonian'' expansion is known by other means \cite{Dixon:1979}.  It will be shown there that the corrected results presented here fully reproduce the expansion known from \cite{Dixon:1979} while the original results of \cite{PB:2002} give rise to a discrepancy.  It was the search for the origin of this discrepancy that led the present author to discover the error in \cite{PB:2002} that has led to the present paper.

\newpage
\renewcommand{\appendixname}{APPENDIX}
\appendix
\renewcommand{\theequation}{\Alph{section}.\arabic{equation}}

\section{RESULTS FOR STF TENSORS\label{STF}}
This appendix is provided to help to make this paper and its companion paper self-contained.  It gives outline proofs for a number of results concerning STF tensors that are used in this paper. With the exception of Lemma \ref{lem2a}, the results themselves are taken from Section A5 of Appendix A of \cite{BD:1986}, where they are given without proof.  The summation convention applies to both ordinary lower-case Latin indices and abbreviated multiple indices denoted by an upper-case Latin letter.
\begin{lem}\label{lem1}
\begin{equation}\label{A1}
  \hat{n}_L = \sum_{k=0}^{\lfloor l/2 \rfloor} a_{l,k}\,\delta_{i_1 i_2} \ldots \delta_{i_{2k-1}\, i_{2k}} n_{i_{2k+1}} \ldots n_{i_l}
\end{equation}
where
\begin{equation}\label{A2}
  a_{l,k} = (-1)^k \frac{l!}{(l-2k)!}\,\frac{(2l-2k-1)!!}{(2k)!!\,(2l-1)!!}.
\end{equation}
\begin{proof}
  By definition $\hat{n}_L$ has this form with $a_{l,0} = 1$.  Identifying the possible distinct locations for $i_1$ and $i_2$ and then contracting on these two indices gives the recurrence relation
  \begin{equation*}
    2k(2l-2k+1)a_{l,k} = - (l-2k+2)(l-2k+1)a_{l,k-1}.
  \end{equation*}
  The result follows from this relation with the initial condition $a_{l,0} = 1$.
\end{proof}
\end{lem}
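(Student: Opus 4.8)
The plan is to establish the explicit formula \eqref{A2} for the coefficients $a_{l,k}$ by setting up a recurrence relation and then verifying that the closed form satisfies it with the correct initial condition. First I would observe that, by the very definition of the STF projection, $\hat{n}_L$ is obtained from $n_L = n_{i_1}\cdots n_{i_l}$ by subtracting suitable multiples of Kronecker deltas, so it necessarily has the form \eqref{A1} for some coefficients $a_{l,k}$, with $a_{l,0}=1$ fixed by the requirement that the leading (trace-free) term be $n_L$ itself. Note that since $n_i n_i = 1$, each delta factor $\delta_{i_{2j-1}i_{2j}}$ that is later contracted contributes a factor of $1$ from $n\cdot n$ and a factor of $3$ from $\delta_{ii}=3$ in the ambient three-dimensional space, so the combinatorics of contraction are the crux.

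The key step is to derive the recurrence by exploiting the trace-free property in two complementary ways: counting the distinct index placements on one hand, and enforcing tracelessness by contraction on the other. Concretely I would contract the identity \eqref{A1} on the pair $i_1,i_2$. On the left the result vanishes because $\hat{n}_L$ is trace-free. On the right, the contraction acts on the generic term $\delta_{i_1i_2}\cdots$; the terms split according to whether $i_1$ and $i_2$ sit on a delta factor or on an $n$ factor, and careful bookkeeping of how many of the $\binom{\,}{\,}$ index assignments fall into each case produces the stated relation
\begin{equation*}
  2k(2l-2k+1)\,a_{l,k} = -(l-2k+2)(l-2k+1)\,a_{l,k-1}.
\end{equation*}
The combinatorial factor $2k(2l-2k+1)$ on the left counts the contractions that keep the number of deltas at $k$, while the right-hand coefficient counts those that reduce a pair of $n$'s to a new delta, hence the shift from $a_{l,k-1}$.

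With the recurrence in hand, the remainder is routine: I would verify that \eqref{A2} solves it by substituting the closed form and checking that the ratio $a_{l,k}/a_{l,k-1}$ reproduces $-(l-2k+2)(l-2k+1)\big/\big(2k(2l-2k+1)\big)$, using the double-factorial identities $(2k)!! = 2^k k!$ and $(2l-2k-1)!!/(2l-2k+1)!! = 1/(2l-2k+1)$. The initial condition $a_{l,0}=1$ follows immediately since every factorial ratio in \eqref{A2} reduces to unity at $k=0$. I expect the main obstacle to be the combinatorial accounting in the contraction step: one must correctly enumerate how the two contracted indices distribute over delta versus vector slots and how symmetrisation weights those configurations, as an error there propagates into a wrong recurrence. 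Everything downstream is a mechanical verification.
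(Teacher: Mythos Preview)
Your proposal is correct and follows essentially the same approach as the paper's own proof: both start from the general form \eqref{A1} with $a_{l,0}=1$, contract on the index pair $i_1,i_2$ using the trace-free property to derive the identical recurrence $2k(2l-2k+1)a_{l,k} = -(l-2k+2)(l-2k+1)a_{l,k-1}$, and then check that \eqref{A2} satisfies it. The paper's proof is simply a terser statement of exactly the steps you outline.
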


\begin{lem}\label{lem2}
\begin{equation}\label{A3}
  \hat{n}_L \hat{n}'_L = n_L \hat{n}'_L = \hat{n}_L n'_L = \frac{l!}{(2l-1)!!}\,P_l(\bn\cdot\bn')
\end{equation}
where $P_l(\bn\cdot\bn')$ is the $l$th Legendre polynomial.
\end{lem}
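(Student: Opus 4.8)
The plan is to prove the three displayed equalities in two stages: first that the three contractions coincide, and then that their common value is the stated multiple of $P_l(\bn\cdot\bn')$.

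For the first stage I would invoke the defining property of the STF projection: for any index set $L$, contracting an arbitrary tensor $T_L$ against an STF tensor $\hat{S}_L$ reproduces $\hat{T}_L \hat{S}_L$, since the trace parts of $T_L$ annihilate against the trace-free $\hat{S}_L$ while any non-symmetric parts annihilate against the symmetric $\hat{S}_L$. Because $n_L = n_{i_1}\cdots n_{i_l}$ is already symmetric, its STF part is precisely $\hat{n}_L$, so $n_L \hat{n}'_L = \hat{n}_L \hat{n}'_L$ and likewise $\hat{n}_L n'_L = \hat{n}_L \hat{n}'_L$. This disposes of the first two equalities and reduces the lemma to the evaluation of $n_L \hat{n}'_L$.

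For the second stage I would use Lemma \ref{lem1} to expand $\hat{n}'_L$ as in \eqref{A1} and contract term by term with $n_{i_1}\cdots n_{i_l}$. Each factor $\delta_{i_a i_b}$ meets $n_{i_a}n_{i_b}$ and, since $\bn$ is a unit vector, contributes $1$; each surviving factor $n'_{i_c}$ meets $n_{i_c}$ and contributes $\bn\cdot\bn'$. Writing $x := \bn\cdot\bn'$, the term with $k$ Kronecker deltas therefore yields $a_{l,k}\,x^{l-2k}$, so that $n_L\hat{n}'_L = \sum_{k=0}^{\lfloor l/2\rfloor} a_{l,k}\,x^{l-2k}$ with $a_{l,k}$ given by \eqref{A2}. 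Any symmetrisation implicit in \eqref{A1} is irrelevant here, as contraction against the symmetric tensor $n_L$ is insensitive to it.

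It then remains to identify this polynomial with $\tfrac{l!}{(2l-1)!!}\,P_l(x)$, and this matching of coefficients is where the real work lies. I would compare the expression with the explicit power-series form of the Legendre polynomial; equating the coefficients of $x^{l-2k}$ reduces, after writing $(2k)!!=2^k\,k!$ and $(2l-2k)! = 2^{l-k}(l-k)!\,(2l-2k-1)!!$, to a routine double-factorial identity that is readily checked. Alternatively, and more conceptually, one may observe that $\hat{n}_L\hat{n}'_L$ is for fixed $\bn'$ a degree-$l$ zonal harmonic in $\bn$ and so must be proportional to $P_l(\bn\cdot\bn')$; the constant of proportionality is then fixed by comparing leading coefficients, using $a_{l,0}=1$ against the leading coefficient $(2l-1)!!/l!$ of $P_l$. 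The main obstacle in either route is purely bookkeeping---keeping the double factorials straight---rather than anything conceptual.
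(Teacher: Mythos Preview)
Your argument is correct and follows essentially the same route as the paper: the first equalities via trace-freeness, then Lemma~\ref{lem1} expanded and contracted to produce a polynomial in $\bn\cdot\bn'$. The only difference is cosmetic---the paper identifies the resulting polynomial with $P_l$ via the Rodrigues formula rather than by matching series coefficients or invoking the zonal-harmonic characterisation.
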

\begin{proof}
The first two equalities hold as the traces are removed if either of the two tensors is trace-free.  If \eqref{A1} is contracted with $l$ copies of the vector $n'_i$ then the result can be put in the form
\begin{equation*}
  \hat{n}_L n'_L = \frac{1}{2^l\,(2l-1)!!}\,\frac{\dif^l}{\dif z^l}(z^2-1)^{2l}
\end{equation*}
where $z := \bn\cdot\bn'$.  The result follows on comparing this with the Rodrigues formula for Legendre polynomials.
\end{proof}

\begin{lem}\label{lem2a}
If $n \in \mathbb{N}$, $x \in \mathbb{R}$ with $-1 < x < 1$ and $h \in \mathbb{C}$ with $|h| < 1$ then
\begin{equation}\label{A3a}
  (1-2hx+h^2)^{n-\frac{\scriptscriptstyle{1}}{\scriptscriptstyle{2}}} = \sum_{l=0}^\infty P_l(x) \sum_{i=0}^n a^{(n)}_{l,i}\,h^{l+2(n-i)}
\end{equation}
where
\begin{equation}\label{A3b}
  a^{(n)}_{l,i} = (-1)^i(2l+1)\,\frac{(2n)!}{2^{2n+1}\,i!\,(n-i)!}\,\frac{\Gamma(l-i+\tfrac{1}{2})} {\Gamma(l-i+n+\tfrac{3}{2})}.
\end{equation}
\end{lem}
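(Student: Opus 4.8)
The plan is to recognise the left-hand side of \eqref{A3a} as a generating function and to determine the coefficient of each Legendre polynomial by orthogonal projection. For $h$ in a real neighbourhood of the origin the integrand $x\mapsto(1-2hx+h^2)^{n-\frac12}$ is smooth and positive on $[-1,1]$, so it has a convergent expansion $\sum_{l\ge0}c_l(h)P_l(x)$ in the complete orthogonal system $\{P_l\}$, with
\begin{equation}
c_l(h)=\frac{2l+1}{2}\int_{-1}^{1}(1-2hx+h^2)^{n-\frac12}P_l(x)\,\dif x .
\end{equation}
The resulting polynomial identity in $h$ then extends to all complex $|h|<1$ by analytic continuation, so the task reduces to showing that $c_l(h)=\sum_{i=0}^{n}a^{(n)}_{l,i}h^{l+2(n-i)}$ with the coefficients \eqref{A3b}.

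First I would reduce the projection integral using the Rodrigues representation $P_l(x)=\frac{1}{2^l l!}\frac{\dif^l}{\dif x^l}(x^2-1)^l$ and integrating by parts $l$ times; the boundary terms vanish because $(x^2-1)^l$ has an $l$-fold zero at $x=\pm1$. Since $\frac{\dif^l}{\dif x^l}(1-2hx+h^2)^{n-\frac12}=\frac{\Gamma(n+\frac12)}{\Gamma(n+\frac12-l)}(-2h)^l(1-2hx+h^2)^{n-\frac12-l}$, the numerical factors combine to $h^l/l!$ and I am left with
\begin{equation}
c_l(h)=\frac{2l+1}{2}\,\frac{h^l}{l!}\,\frac{\Gamma(n+\tfrac12)}{\Gamma(n+\tfrac12-l)}\int_{-1}^{1}(x^2-1)^l(1-2hx+h^2)^{n-\frac12-l}\,\dif x .
\end{equation}
The explicit factor $h^l$ already accounts for the lowest power in \eqref{A3a}, and a parity check ($h\to-h$ together with $x\to-x$) shows the remaining integral is even in $h$, consistent with the even spacing of the powers $h^{l},h^{l+2},\dots,h^{l+2n}$.

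The substitution $x=\cos\theta$ turns the remaining integral into $(-1)^l\int_0^\pi\sin^{2l+1}\theta\,(1-2h\cos\theta+h^2)^{n-l-\frac12}\,\dif\theta$. I would evaluate this by factoring out $(1-2h\cos\theta+h^2)^{-(l+\frac12)}=\sum_{m\ge0}C_m^{l+\frac12}(\cos\theta)h^m$ (the Gegenbauer generating function) from the polynomial factor $(1-2h\cos\theta+h^2)^n$ and using the orthogonality of the $C_m^{l+\frac12}$ with weight $\sin^{2l+1}\theta$; equivalently one may write the integral as a Beta integral and read it off as a terminating Gauss hypergeometric series. Either route produces a polynomial of degree $2n$ in $h$ whose coefficients are ratios of Gamma functions.

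\emph{The main obstacle} is the final bookkeeping: collecting these contributions, reindexing the power of $h$ as $h^{l+2(n-i)}$, and simplifying the accumulated Gamma-function factors — together with the prefactor $\Gamma(n+\frac12)/\Gamma(n+\frac12-l)$ and the reflection formula — into the compact closed form \eqref{A3b}. I would check the normalisation against the classical case $n=0$, where \eqref{A3a} must reduce to $(1-2hx+h^2)^{-\frac12}=\sum_l P_l(x)h^l$ and indeed $a^{(0)}_{l,0}=1$, and against $n=1$ as a further guard. Conceptually, the reason only $n+1$ powers of $h$ accompany each $P_l$ is transparent from the generating-function viewpoint: \eqref{A3a} is the Gegenbauer expansion with parameter $\lambda=\frac12-n$, and the connection formula re-expressing $C_l^{1/2-n}$ in the Legendre family $C_m^{1/2}=P_m$ carries a factor $(-n)_k$ that terminates after $k=n$ steps, the coefficients \eqref{A3b} being precisely these connection coefficients. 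This also suggests an alternative by induction on $n$: multiplying the identity for $n$ by $(1-2hx+h^2)$ and using the three-term recurrence $(2l+1)xP_l=(l+1)P_{l+1}+lP_{l-1}$ yields a four-term recurrence for the $a^{(n)}_{l,i}$, which one then verifies is solved by \eqref{A3b}.
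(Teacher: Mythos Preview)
Your primary approach---orthogonal projection followed by Rodrigues' formula and a Gegenbauer/hypergeometric evaluation---is sound in principle and would ultimately work, but it is not the route the paper takes. The paper proceeds entirely by the induction you mention only as an afterthought: it starts from the $n=0$ generating function, multiplies by $(1-2hx+h^2)$ and uses the three-term recurrence $(2l+1)xP_l=(l+1)P_{l+1}+lP_{l-1}$ to establish that an expansion of the form \eqref{A3a} exists for each $n$, and then extracts a first-order recurrence in $n$ for the coefficients by substituting \eqref{A3a} into the identity
\[
h\,\frac{\dif}{\dif h}(1-2hx+h^2)^{n+\frac12}=\big(n+\tfrac12\big)\Big[(1-2hx+h^2)^{n+\frac12}+(h^2-1)(1-2hx+h^2)^{n-\frac12}\Big],
\]
verifying that \eqref{A3b} satisfies it. Your projection method has the conceptual advantage of explaining \emph{why} only $n+1$ terms appear (they are Gegenbauer connection coefficients with a terminating $(-n)_k$), but the ``main obstacle'' you flag---collapsing the accumulated Gamma factors into the compact form \eqref{A3b}---is real and you have not carried it out; the paper's induction sidesteps that algebra entirely at the cost of requiring one to guess or be given \eqref{A3b} in advance. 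If you want a complete proof by your route, the cleanest finish is probably to expand $(1-2h\cos\theta+h^2)^n$ binomially, integrate the resulting $\int_0^\pi\sin^{2l+1}\theta\,\cos^j\theta\,(1-2h\cos\theta+h^2)^{-(l+\frac12)}\,\dif\theta$ via the Gegenbauer orthogonality relation, and then reindex; but that is precisely the bookkeeping you deferred.
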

\begin{proof}
The result holds for $n=0$ since \eqref{A3b} gives $a^{(0)}_{l,0} = 1$ for all $l$ and so for this case, \eqref{A3a} reduces to the generating function for Legendre polynomials.  The proof is by induction in two stages from this starting point.  The first stage proves that \eqref{A3a} holds for all $n \in \mathbb{N}$ for some coefficients $a^{(n)}_{l,i}$, the second stage proves that they are given by \eqref{A3b}.

If \eqref{A3a} holds for some $n$ then multiplication by $(1-2hx+h^2)$ and use of the recurrence relation
\begin{equation*}
  (2l+1)xP_l(x) = (l+1)P_{l+1}(x) + lP_{l-1}(x)
\end{equation*}
shows that it holds for $n+1$ for some $a^{(n+1)}_{l,i}$.  Substitution of \eqref{A3a} for $n$ and $n+1$ into the identity
\begin{equation*}
  h\,\frac{\dif}{\dif h}(1-2hx+h^2)^{n+\frac{\scriptscriptstyle{1}}{\scriptscriptstyle{2}}}
  = \big(n+\tfrac{1}{2}\big)\big[(1-2hx+h^2)^{n+\frac{\scriptscriptstyle{1}}{\scriptscriptstyle{2}}}
  + (h^2-1)(1-2hx+h^2)^{n-\frac{\scriptscriptstyle{1}}{\scriptscriptstyle{2}}})\big]
\end{equation*}
and further use of the recurrence relation gives
\begin{equation*}
  \big(n+l-2i+\tfrac{3}{2}\big)\,a^{(n+1)}_{l,i} = \big(n+\tfrac{1}{2}\big) \big(a^{(n)}_{l,i-1}-a^{(n)}_{l,i}\big)
\end{equation*}
where $a^{(n)}_{l,i}$ is taken to be zero if $i<0$ or $i>n$.  This is satisfied by \eqref{A3b} identically, and as it also satisfies the initial condition $a^{(0)}_{l,0} = 1$, it holds for all values of its parameters by induction.
\end{proof}

\begin{lem}\label{lem3}
  \begin{equation}\label{A4}
    n_i \hat{n}_{iL} = \frac{l+1}{2l+1}\,\hat{n}_L.
  \end{equation}
\end{lem}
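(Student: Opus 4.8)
The plan is to avoid any direct manipulation of the explicit index expansion and instead exploit uniqueness together with Lemma~\ref{lem2}. First I would observe that the left-hand side $n_i\hat{n}_{iL}$ is itself an STF tensor of order $l$: it is symmetric in the indices $L$ because $\hat{n}_{iL}$ is totally symmetric, and it is trace-free in $L$ because taking a trace over any two of the indices in $L$ produces a trace of the already trace-free tensor $\hat{n}_{iL}$, which vanishes. Since $n_i\hat{n}_{iL}$ is a rotationally covariant STF tensor of order $l$ built solely from the single unit vector $\bn$, and the only such tensor up to an overall scalar is $\hat{n}_L$, one concludes that
\begin{equation*}
  n_i\hat{n}_{iL} = \lambda_l\,\hat{n}_L
\end{equation*}
for some constant $\lambda_l$ depending only on $l$. (If one prefers to avoid appealing to uniqueness directly, the same conclusion follows by contracting both sides with an arbitrary $\hat{n}'_L$ and noting, via Lemma~\ref{lem2}, that each side is a multiple of $P_l(\bn\cdot\bn')$.)

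It then remains only to fix the constant $\lambda_l$, which I would do by contracting both sides with $\hat{n}_L$, i.e.\ evaluating everything on the diagonal $\bn'=\bn$. On the right, Lemma~\ref{lem2} with $\bn'=\bn$ and $P_l(1)=1$ gives $\lambda_l\,\hat{n}_L\hat{n}_L = \lambda_l\,l!/(2l-1)!!$. On the left I would use the trace-free property of $\hat{n}_{iL}$ together with the first equality of Lemma~\ref{lem2}, which lets me replace contraction against trace-free $\hat{n}_{iL}$ by contraction against the full product: the trace terms hidden in $\hat{n}_L$ contract to zero, so $n_i\hat{n}_{iL}\hat{n}_L = \hat{n}_{iL}\,n_i n_L = \hat{n}_{iL}\hat{n}_{iL}$. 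The latter is a full contraction of order $l+1$, evaluated by Lemma~\ref{lem2} (with $\bn'=\bn$) as $(l+1)!/(2l+1)!!$. Equating the two expressions and using $(2l+1)!!=(2l+1)(2l-1)!!$ yields $\lambda_l = (l+1)/(2l+1)$, as claimed.

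The only step that genuinely requires care is the passage from ``$n_i\hat{n}_{iL}$ is an STF tensor of order $l$ built from $\bn$'' to ``$n_i\hat{n}_{iL}$ is a multiple of $\hat{n}_L$'', since this is where the one-dimensionality of the relevant space of covariant tensors is used; the alternative contraction-with-$\hat{n}'_L$ argument makes this rigorous without invoking representation theory. The remaining computation is then immediate, relying on nothing beyond Lemma~\ref{lem2} applied in orders $l$ and $l+1$ and the identity $P_l(1)=1$.
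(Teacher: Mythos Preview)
Your argument is correct. The first step, recognizing that $n_i\hat{n}_{iL}$ is an STF tensor of rank $l$ built from $\bn$ alone and hence a scalar multiple of $\hat{n}_L$, is exactly the observation the paper makes. Where you diverge from the paper is in fixing the constant $\lambda_l$. The paper does this by reading off the coefficient of the pure $n_L$ term directly from the explicit expansion of $\hat{n}_{iL}$ furnished by Lemma~\ref{lem1}: contracting the $k=0$ term of \eqref{A1} with $n_i$ contributes $1$, while the $k=1$ term contributes the pairs in which $i$ sits on a Kronecker delta, yielding $\tfrac{2}{l+1}a_{l+1,1}$, and the value \eqref{A2} for $a_{l+1,1}$ then gives $\lambda_l=(l+1)/(2l+1)$. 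You instead contract the putative identity with $\hat{n}_L$ and invoke Lemma~\ref{lem2} at orders $l$ and $l+1$, using $P_l(1)=1$; the key manipulation $n_i\hat{n}_{iL}\hat{n}_L=\hat{n}_{iL}n_{iL}$ is justified precisely because $\hat{n}_{iL}$ is trace-free on all $l+1$ indices, so the trace terms in $\hat{n}_L$ drop out.

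Your route has the advantage of never touching the explicit coefficients $a_{l,k}$ of Lemma~\ref{lem1}; it needs only the scalar normalization encoded in Lemma~\ref{lem2}. The paper's route, by contrast, is a one-line read-off once Lemma~\ref{lem1} is in hand, and it illustrates directly how the expansion \eqref{A1} is meant to be used. Both are short and entirely elementary.
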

\begin{proof}
  The left hand side is an STF tensor composed of $n_i$'s and $\delta_{ij}$'s and so is a multiple of $\hat{n}_L$.  The multiplier is the coefficient of the term in its expansion by Lemma \ref{lem1} that is composed solely of $n_i$'s.  This can come only from the $k=0$ and $k=1$ terms in the sum in \eqref{A1} and is therefore $\big(1+ \tfrac{2}{l+1}a_{l+1,1})\big)$.  The result then follows from \eqref{A2}.
\end{proof}

\begin{lem}\label{lem4}
  \begin{equation}\label{A5}
    n_i\,\hat{n}_{a_1 \ldots a_l}= \hat{n}_{i a_1 \ldots a_l} + \frac{l}{2l+1}\,\delta_{i< a_1} \hat{n}_{a_2 \ldots a_l>}.
  \end{equation}
\end{lem}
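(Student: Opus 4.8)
The plan is to decompose the tensor $n_i\hat{n}_{a_1\ldots a_l}$ into its irreducible STF constituents and then to fix the single free coefficient by contracting with $n_i$ and invoking Lemma \ref{lem3}. First I would record the structural features of the left-hand side: since $\hat{n}_{a_1\ldots a_l}$ is symmetric and trace-free on $a_1,\ldots,a_l$, the product $n_i\hat{n}_{a_1\ldots a_l}$ is symmetric in those $l$ indices and its trace on any pair drawn from $a_1,\ldots,a_l$ vanishes; its only non-vanishing traces pair $i$ with one of the $a$'s. Consequently, when it is resolved into STF pieces on the full index set, the expansion can contain only the fully trace-free part $\hat{n}_{ia_1\ldots a_l}$ together with a single-trace remainder carrying one Kronecker delta. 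A second delta would demand a further trace, but contracting $n_i$ into the $a$-block gives, by \eqref{A4} at rank $l-1$, a multiple of the already trace-free $\hat{n}$ of lower rank, so no additional structure can appear. Because the remainder is built from the single unit vector $\mathbf{n}$ and must be STF on the indices other than $i$, it is forced to be a multiple of $\delta_{i<a_1}\hat{n}_{a_2\ldots a_l>}$, giving
\begin{equation*}
  n_i\hat{n}_{a_1\ldots a_l} = \hat{n}_{ia_1\ldots a_l} + c\,\delta_{i<a_1}\hat{n}_{a_2\ldots a_l>}
\end{equation*}
for a single constant $c$ depending only on $l$.

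To determine $c$ I would contract both sides with $n_i$. On the left $n_in_i=1$ leaves $\hat{n}_{a_1\ldots a_l}$. On the right the first term gives $\tfrac{l+1}{2l+1}\hat{n}_{a_1\ldots a_l}$ at once from \eqref{A4}. For the second term the contraction replaces $\delta_{ia_1}$ by $n_{a_1}$ inside the STF bracket, leaving the STF projection of $n_{a_1}\hat{n}_{a_2\ldots a_l}$ over $a_1,\ldots,a_l$. This projection is an STF tensor assembled from $\mathbf{n}$ alone, hence a multiple of $\hat{n}_{a_1\ldots a_l}$; contracting it with $\hat{n}_{a_1\ldots a_l}$ and reducing the two resulting norms by \eqref{A3} (using $P_l(1)=1$) together with \eqref{A4} at rank $l-1$ shows the multiple to be unity, so this term contributes $c\,\hat{n}_{a_1\ldots a_l}$. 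Equating the coefficients of $\hat{n}_{a_1\ldots a_l}$ gives $1=\tfrac{l+1}{2l+1}+c$, whence $c=\tfrac{l}{2l+1}$, which is \eqref{A5}.

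I expect the main obstacle to be the structural step, namely justifying cleanly that the remainder truncates at exactly one Kronecker delta and has precisely the form $\delta_{i<a_1}\hat{n}_{a_2\ldots a_l>}$; the crux is the fact that the only STF tensor of a given rank constructible from the single vector $\mathbf{n}$ is a scalar multiple of $\hat{n}$, which is what both closes the truncation and makes the norm computation in the coefficient step work. If a fully self-contained argument is preferred, this step can be bypassed by substituting the explicit expansion \eqref{A1}--\eqref{A2} for both $\hat{n}_{a_1\ldots a_l}$ and $\hat{n}_{ia_1\ldots a_l}$, multiplying the former by $n_i$, and matching the delta-free and single-delta contributions term by term, whereupon the identity reduces to a relation among the $a_{l,k}$. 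I would nonetheless prefer the contraction argument, as the direct route is considerably more computational.
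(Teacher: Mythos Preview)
Your proof is correct and follows essentially the same route as the paper's: both establish the two-term decomposition with one undetermined coefficient and then fix that coefficient by contracting with $n_i$ and invoking Lemma~\ref{lem3}. The only cosmetic difference is that the paper writes the ansatz for $\hat{n}_{ia_1\ldots a_l}$ rather than for $n_i\hat{n}_{a_1\ldots a_l}$, and pins the leading coefficient to~$1$ by reading off the purely $n$-built term via Lemma~\ref{lem1}, whereas you do the equivalent step by noting that the STF projection of $n_i\hat{n}_L$ over all $l+1$ indices is $\hat{n}_{iL}$.
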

\begin{proof}
In the expansion of $\hat{n}_{i a_1 \ldots a_l}$ by Lemma \ref{lem1}, the index $i$ will be either on an $n_i$ or a $\delta_{ia_k}$ for some $k$.  It will also be STF on the indices $a_1\ldots a_l$ and so will have the form
\begin{equation*}
  \hat{n}_{i a_1 \ldots a_l} = A\, n_i\, n_{<a_1\ldots a_l>} + B\, \delta_{i<a_1}\,n_{a_2\ldots a_l>}
\end{equation*}
for some $A$, $B$.  Note that there can be no $\delta$-terms within the angle brackets around the indices, as they would be removed by the taking of the trace-free part. Comparing the contributions to both sides that consist solely of $(l+1)$ copies of $n_i$ shows that $A=1$.  Contracting with a further $n_i$ and use of Lemma \ref{lem3} determines $B$.  Equation \eqref{A5} is this result in a slightly different notation.
\end{proof}

\begin{lem}\label{lem5}
  \begin{equation}\label{A6}
    r\,\partial_i \hat{n}_L = (l+1)n_i \hat{n}_L - (2l+1)\hat{n}_{iL}.
  \end{equation}
\end{lem}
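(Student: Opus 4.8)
The plan is to differentiate $\hat{n}_L$ directly, exploiting the fact that the symmetric trace-free projection on the index block $L=a_1\cdots a_l$ is a linear operation with purely numerical coefficients (the Kronecker deltas and rational constants displayed in Lemma \ref{lem1}). Since the spectator index $i$ takes no part in that projection, the operator $\partial_i$ commutes with it, so that $\partial_i \hat{n}_L = \mathrm{STF}_L[\partial_i(n_{a_1}\cdots n_{a_l})]$, where $\mathrm{STF}_L$ denotes trace-free symmetrisation over $a_1,\ldots,a_l$ alone. Establishing this commutation cleanly is, I expect, the one genuinely delicate point; everything after it is bookkeeping, reducing the problem to results already proved.

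First I would record the elementary derivative $\partial_i n_j = (\delta_{ij}-n_i n_j)/r$, which follows from $n_j=x_j/r$ together with $\partial_i r = n_i$. Applying the product rule to $n_{a_1}\cdots n_{a_l}$ then gives
\[
  r\,\partial_i(n_{a_1}\cdots n_{a_l}) = \sum_{k=1}^l \delta_{ia_k}\prod_{j\neq k}n_{a_j} - l\,n_i\,n_{a_1}\cdots n_{a_l}.
\]
Next I would apply $\mathrm{STF}_L$ to this identity. In the first sum every one of the $l$ terms symmetrises to the same trace-free tensor, so the sum collapses to $l\,\delta_{i<a_1}\hat{n}_{a_2\cdots a_l>}$; in the second term the scalar factor $n_i$ passes outside the projection, leaving $l\,n_i\hat{n}_L$. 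This yields
\[
  r\,\partial_i\hat{n}_L = l\,\delta_{i<a_1}\hat{n}_{a_2\cdots a_l>} - l\,n_i\hat{n}_L.
\]

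Finally I would eliminate the remaining unwanted structure using Lemma \ref{lem4}. Rewriting \eqref{A5} as $l\,\delta_{i<a_1}\hat{n}_{a_2\cdots a_l>} = (2l+1)\big(n_i\hat{n}_L-\hat{n}_{iL}\big)$ and substituting gives $r\,\partial_i\hat{n}_L=(2l+1)\big(n_i\hat{n}_L-\hat{n}_{iL}\big)-l\,n_i\hat{n}_L$, which collapses to $(l+1)n_i\hat{n}_L-(2l+1)\hat{n}_{iL}$, exactly \eqref{A6}. As a consistency check one may contract the result with $n_i$: the right-hand side reduces, via Lemma \ref{lem3}, to $(l+1)\hat{n}_L-(2l+1)\cdot\frac{l+1}{2l+1}\hat{n}_L=0$, in agreement with the fact that $\hat{n}_L$ is homogeneous of degree zero in $\bx$, so that $n_i\partial_i\hat{n}_L$ must vanish.
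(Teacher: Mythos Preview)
Your proof is correct and follows essentially the same route as the paper's: compute $r\,\partial_i n_L$, take the STF part on $L$ (using that the projector has constant coefficients and hence commutes with $\partial_i$), and then invoke Lemma~\ref{lem4} to eliminate $\delta_{i<a_1}\hat n_{a_2\cdots a_l>}$. The paper obtains $r\,\partial_i n_L$ by writing $n_L=r^{-l}x_L$ and differentiating, whereas you use $\partial_i n_j=(\delta_{ij}-n_in_j)/r$ and the product rule; these are equivalent one-line computations, and your added contraction check via Lemma~\ref{lem3} is a nice extra.
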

\begin{proof}
Since $\partial_i r = n_i$, it follows that
\begin{equation*}
  r\,\partial_i n_{a_1\ldots a_l} = r\,\partial_i(r^{-l}x_{a_1}\ldots x_{a_l})
  = l(\delta_{i(a_1}\,n_{a_2\ldots a_l)} - n_i n_L).
\end{equation*}
Taking the STF part on $L$ and use of Lemma \ref{lem4} gives the required result.
\end{proof}

\begin{lem}\label{lem6}
  \begin{equation}\label{A7}
    \hat{\partial}_L f(r) = \hat{n}_L r^l \left(r^{-1}\frac{\partial}{\partial r}\right)^l f(r),
  \end{equation}
  \begin{equation}\label{A8}
    \hat{\partial}_L r^\lambda = \lambda(\lambda-2)\ldots (\lambda-2l+2)\hat{n}_L\,r^{\lambda-l}
  \quad\mbox{for } \lambda \in \mathbb{C}.
  \end{equation}
\end{lem}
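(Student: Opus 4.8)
The plan is to prove \eqref{A7} first by induction on $l$, and then to obtain \eqref{A8} as an immediate corollary by applying \eqref{A7} to the specific function $f(r) = r^\lambda$. The statement \eqref{A7} asserts that the STF part of $l$ spatial derivatives acting on a radial function $f(r)$ can be rewritten as the angular factor $\hat{n}_L$ times a purely radial differential operator $(r^{-1}\partial_r)^l$ applied to $f$, scaled by $r^l$. The base case $l=0$ is trivial since both sides reduce to $f(r)$, and the case $l=1$ follows directly from $\partial_i f(r) = n_i f'(r) = n_i r\,(r^{-1}\partial_r)f(r)$, noting that for a single index $\hat{\partial}_i = \partial_i$ and $\hat{n}_i = n_i$.

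For the inductive step I would assume \eqref{A7} holds for $l$ and compute $\hat{\partial}_{iL} f(r)$. The key manipulation is to apply one further derivative $\partial_i$ to the right-hand side of \eqref{A7}, then take the STF part on the full index set $\{i,L\}$. Applying the product rule, $\partial_i$ acts on three factors: the angular part $\hat{n}_L$, the power $r^l$, and the radial operator expression $g(r) := (r^{-1}\partial_r)^l f(r)$. Using $\partial_i r = n_i$ together with Lemma \ref{lem5} (which gives $r\,\partial_i \hat{n}_L$ in terms of $n_i \hat{n}_L$ and $\hat{n}_{iL}$), the derivative of the angular factor produces terms proportional to $\hat{n}_{iL}$ and to $n_i\hat{n}_L$. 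The crucial simplification is that upon taking the STF part on all indices, the terms of the form $n_i \hat{n}_L$ — which contain a trace piece relative to the full symmetrization — are removed, leaving only the $\hat{n}_{iL}$ contribution together with the radial derivatives. After collecting the surviving terms, one should find exactly $\hat{n}_{iL}\, r^{l+1} (r^{-1}\partial_r)^{l+1} f(r)$, completing the induction.

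For \eqref{A8}, substitute $f(r) = r^\lambda$ into \eqref{A7}. The radial operator acts cleanly on powers: $(r^{-1}\partial_r) r^\mu = \mu\, r^{\mu-2}$, so iterating $l$ times gives $(r^{-1}\partial_r)^l r^\lambda = \lambda(\lambda-2)\cdots(\lambda-2l+2)\, r^{\lambda-2l}$. Multiplying by the prefactor $\hat{n}_L r^l$ from \eqref{A7} yields $\lambda(\lambda-2)\cdots(\lambda-2l+2)\,\hat{n}_L\, r^{\lambda-l}$, which is precisely \eqref{A8}.

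I expect the main obstacle to be the inductive step for \eqref{A7}, specifically tracking how the trace-removal in taking the STF part of the full index set $\{i,L\}$ interacts with the $n_i\hat{n}_L$ term generated by Lemma \ref{lem5}. One must verify carefully that the angular structure after symmetrization collapses exactly to $\hat{n}_{iL}$ and that no stray radial factors survive; this is where the identities of Lemmas \ref{lem4} and \ref{lem5} do the essential work, and where a sign or combinatorial factor could easily go astray.
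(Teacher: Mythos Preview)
Your proposal is correct and follows essentially the same route as the paper: induction on $l$ using Lemma~\ref{lem5} to differentiate $\hat n_L$, then taking the STF part on all $l{+}1$ indices, with \eqref{A8} obtained as the special case $f(r)=r^\lambda$. One small correction to your description of the inductive step: the STF part of $n_i\hat n_L$ over all $l{+}1$ indices is $\hat n_{iL}$, not zero (this is Lemma~\ref{lem4}); what actually vanishes under the STF projection is the combination $(2l{+}1)(n_i\hat n_L - \hat n_{iL})r^{l-1}g$ coming from the derivatives of $\hat n_L$ and $r^l$, while the remaining term $n_i\hat n_L\,r^{l+1}(r^{-1}\partial_r)^{l+1}f$ projects to $\hat n_{iL}\,r^{l+1}(r^{-1}\partial_r)^{l+1}f$, which is exactly what you need.
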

\begin{proof}
  Equation \eqref{A7} holds for $l=1$.  Assume it holds for some $l$ with $l \geq 1$.  Then with use of Lemma \ref{lem5}
  \begin{equation*}
    \partial_i \hat{\partial}_L f(r) = (2l+1)(n_i \hat{n}_L - \hat{n}_{iL})r^{l-1}\left(r^{-1}\frac{\partial}{\partial r}\right)^l f(r)
    + n_i \hat{n}_L r^{l+1}\left(r^{-1}\frac{\partial}{\partial r}\right)^{l+1} f(r).
  \end{equation*}
  Taking the STF part of this on all $(l+1)$ indices gives \eqref{A7} for $(l+1)$, so proving it for all $l$ by induction.  Equation \eqref{A8} is a special case of \eqref{A7}.
\end{proof}

\begin{lem}\label{lem7}
  \begin{equation}\label{A9}
  \hat{\partial}_L \left(\frac{f(r)}{r}\right) = \hat{n}_L \sum_{k=0}^l a_{l,k}\, r^{k-l-1}\,\frac{\partial^k}{\partial r^k}f(r)
  \end{equation}
  where
  \begin{equation}\label{A10}
    a_{l,k} = \frac{(-1)^{l+k}(2l-k)!}{k!\,(2l-2k)!!}.
  \end{equation}
\end{lem}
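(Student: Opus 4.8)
The plan is to reduce \eqref{A9} to Lemma \ref{lem6} and then pin down the coefficients \eqref{A10} by induction on $l$. Writing $D := r^{-1}\,\partial/\partial r$, equation \eqref{A7} applied to the function $f(r)/r$ gives $\hat{\partial}_L(f(r)/r) = \hat{n}_L\,r^l\,D^l(f(r)/r)$. Comparing with the desired \eqref{A9}, it therefore suffices to establish the scalar identity
\begin{equation*}
  D^l\!\left(\frac{f(r)}{r}\right) = \sum_{k=0}^l a_{l,k}\, r^{k-2l-1}\,\frac{\partial^k}{\partial r^k}f(r),
\end{equation*}
since the overall factor $r^l$ from \eqref{A7} then shifts the exponent $k-2l-1$ to the $k-l-1$ appearing in \eqref{A9}.

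First I would prove this reduced identity by induction on $l$. The case $l=0$ is immediate because $a_{0,0}=1$. For the inductive step I would apply $D$ to each term of the assumed expression, using $D(r^{m}\,\partial_r^k f) = m\,r^{m-2}\,\partial_r^k f + r^{m-1}\,\partial_r^{k+1}f$ with $m = k-2l-1$. Collecting the coefficient of $r^{j-2l-3}\,\partial_r^j f$ in $D^{l+1}(f/r)$, to which the $k=j$ term contributes through its first summand and the $k=j-1$ term through its second, then yields the two-term recurrence
\begin{equation*}
  a_{l+1,j} = (j-2l-1)\,a_{l,j} + a_{l,j-1},
\end{equation*}
under the convention that $a_{l,k}=0$ whenever $k<0$ or $k>l$. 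Together with $a_{0,0}=1$ this determines all the coefficients uniquely.

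The remaining step, which I expect to be the only real obstacle, is the purely algebraic verification that the closed form \eqref{A10} satisfies this recurrence. Substituting \eqref{A10} and writing $j-2l-1 = -(2l+1-j)$ so that both terms acquire the common sign $(-1)^{l+j+1}$, I would clear the double factorials using $(2l-2j+2)!! = (2l-2j+2)(2l-2j)!!$ and combine the two fractions over the common denominator $j!\,(2l-2j+2)!!$; the numerator then collapses via $(2l-2j+2)+j = 2l-j+2$ and $(2l-j+2)(2l+1-j)! = (2l+2-j)!$ to exactly $a_{l+1,j}$. I would also check the boundary indices $j=0$ and $j=l+1$ directly, confirming $a_{l+1,0} = -(2l+1)\,a_{l,0}$ and $a_{l+1,l+1} = a_{l,l} = 1$, as it is there that the vanishing convention must be invoked. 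This closes the induction and, with the first paragraph, proves \eqref{A9}.
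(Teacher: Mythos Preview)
Your proposal is correct and follows essentially the same route as the paper: reduce to the scalar identity via Lemma~\ref{lem6}, derive the two-term recurrence $a_{l+1,j} = a_{l,j-1} - (2l-j+1)\,a_{l,j}$ by applying the radial operator once more, and then verify that the closed form \eqref{A10} satisfies it with $a_{0,0}=1$. The only cosmetic difference is that the paper phrases the scalar identity in terms of $(\partial_r\circ r^{-1})^l f$ rather than $D^l(f/r)$, which shifts the power of $r$ by one but yields the identical recurrence; your explicit check of the boundary indices and of the algebraic collapse to $a_{l+1,j}$ is more detailed than what the paper records.
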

\begin{proof}
It follows from \eqref{A7} that \eqref{A9} holds for coefficients $a_{l,k}$ such that
\begin{equation*}
  \left(\frac{\partial}{\partial r}\circ r^{-1}\right)^l f(r) = \sum_{k=0}^l a_{l,k}\,r^{k-2l}\,\frac{\partial^k}{\partial r^k}f(r)
\end{equation*}
if $l\geq 0$, where $\circ$ denotes the composition of the two operators, multiplication and differentiation.  It also follows that $a_{0,0} = 1$.  Applying this composite operator once more leads to the recurrence relation
\begin{equation*}
  a_{l+1,k} = a_{l,k-1} - (2l-k+1)a_{l,k}
\end{equation*}
where $a_{l,k}$ is taken as zero if $k<0$ or $k>l$.  The expression \eqref{A10} satisfies this recurrence relation and the initial condition $a_{0,0} = 1$ and so is true for all $l$ and $k$ by induction on $l$.
\end{proof}

\begin{lem}\label{lem8}
  \begin{equation}\label{A11}
  n'_Q \int \dif \Omega\, \hat{n}_Q \hat{n}_P = \frac{4\uppi p!}{(2p+1)!!}\, \delta_{pq}\,\hat{n}'_P
  \end{equation}
  where $\dif \Omega$ is an element of solid angle in the direction $\bn$.
\end{lem}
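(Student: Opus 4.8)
The plan is to exploit the fact that, once the $Q$ indices have been contracted with $n'_Q$, both sides of \eqref{A11} are STF tensors carrying the free index set $P$, and that such a tensor is completely determined by the scalar obtained on contracting it with $n''_P$ for an arbitrary unit vector $\bn''$. I would therefore contract the whole identity with $n''_P$ and reduce everything to integrals of Legendre polynomials by repeated use of Lemma \ref{lem2}. (That the result must carry the factor $\delta_{pq}$ can also be seen purely structurally: after integration over $\dif\Omega$ the tensor is rotationally invariant, hence built from Kronecker deltas, while being STF on $Q$ and separately on $P$ forbids any delta bridging two $Q$ or two $P$ indices, so every delta must join a $Q$ index to a $P$ index and the count can balance only when $p=q$.)

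First I would contract the left-hand side with $n''_P$, writing
\[
  n'_Q\, n''_P \int \dif\Omega\, \hat{n}_Q \hat{n}_P = \int \dif\Omega\, (\hat{n}_Q n'_Q)(\hat{n}_P n''_P).
\]
Lemma \ref{lem2} turns each bracket into a Legendre polynomial, producing a factor $\tfrac{q!}{(2q-1)!!}\tfrac{p!}{(2p-1)!!}$ multiplying $\int \dif\Omega\, P_q(\bn\cdot\bn')P_p(\bn\cdot\bn'')$. The central analytic input is then the orthogonality relation for Legendre polynomials with two different arguments,
\[
  \int \dif\Omega\, P_q(\bn\cdot\bn')P_p(\bn\cdot\bn'') = \frac{4\uppi}{2p+1}\,\delta_{pq}\,P_p(\bn'\cdot\bn''),
\]
which follows from the spherical-harmonic addition theorem together with orthonormality over the sphere (and which I would record, with a one-line derivation, as a preliminary). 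This already supplies the factor $\delta_{pq}$.

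Next I would contract the right-hand side with $n''_P$ and again apply Lemma \ref{lem2}, now as $\hat{n}'_P n''_P = \tfrac{p!}{(2p-1)!!}P_p(\bn'\cdot\bn'')$. Setting $q=p$ on account of $\delta_{pq}$ and using $(2p+1)!! = (2p+1)(2p-1)!!$, the two resulting scalars coincide, both equal to $\tfrac{4\uppi}{2p+1}\big(\tfrac{p!}{(2p-1)!!}\big)^2\delta_{pq}P_p(\bn'\cdot\bn'')$. Comparison of coefficients then confirms the numerical constant $\tfrac{4\uppi p!}{(2p+1)!!}$.

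The one step that needs care — and the main obstacle — is justifying that matching the contraction with $n''_P$ for \emph{every} unit vector $\bn''$ suffices to conclude equality of the two STF tensors themselves. I would supply this directly: if $T_P$ is STF and $T_P n''_P$ vanishes for all $\bn''$ on the sphere, then the homogeneous polynomial $T_P x_P$ of degree $p$ vanishes on the unit sphere and hence, by homogeneity (scaling any $\bx\neq 0$ to the sphere), vanishes identically; this forces the symmetric part of $T_P$, and therefore $T_P$ itself, to be zero. Applying this to the difference of the two sides of \eqref{A11}, which is STF on $P$ because $\hat{n}_P$, integration over $\dif\Omega$, and contraction on $Q$ all preserve that property, completes the proof.
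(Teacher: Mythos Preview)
Your proof is correct but follows a genuinely different route from the paper. The paper proceeds by first evaluating the \emph{un}-hatted integral $\int\dif\Omega\,n_Q n_P$ as an isotropic tensor, deriving the recurrence $K(l-2)=\tfrac{l+1}{l-1}K(l)$ for its overall coefficient (so $K(l)=4\uppi/(l+1)$), and then performing a combinatorial count of which of the $(2p)!$ orderings of indices on the symmetrized Kronecker deltas survive the STF projections on $P$ and on $Q$; this forces $p=q$ and leaves $2^p(p!)^2$ orderings, giving the constant directly.

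Your approach instead contracts with a second unit vector $\bn''$ and uses Lemma~\ref{lem2} twice to reduce the problem to the Legendre orthogonality/addition relation $\int\dif\Omega\,P_q(\bn\!\cdot\!\bn')P_p(\bn\!\cdot\!\bn'')=\tfrac{4\uppi}{2p+1}\delta_{pq}P_p(\bn'\!\cdot\!\bn'')$, then recovers the tensor identity from the scalar one via the nondegeneracy argument you supply. This is cleaner and leverages results already in the appendix, at the cost of importing the addition theorem as an outside ingredient (the paper's proof is fully self-contained and never invokes it). Conversely, the paper's isotropic-tensor argument makes the $\delta_{pq}$ selection and the numerical factor emerge from pure combinatorics without needing your final ``scalar determines STF tensor'' step.
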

\begin{proof}
Consider the integral
\begin{equation*}
  \int \dif \Omega\, n_Q\,n_P
\end{equation*}
where the STF parts have not yet been taken.  This is an isotropic tensor of rank $l := p+q$ that is zero by symmetry if $l$ is odd. When $l$ is even it is a multiple, say $K(l)$, of the totally symmetrized product of $l/2$ Kronecker deltas.  Contract on two indices, say $i_1$ and $i_2$.  Of the $l!$ orderings of the indices on the deltas, $i_1$ and $i_2$ will be on the same delta for a fraction $1/(l-1)$ of them and will contract to give a factor $3$ while on the remaining fraction $(l-2)/(l-1)$ the contraction will result in another delta.  Hence
\begin{equation*}
  K(l-2) = \frac{l+1}{l-1}\,K(l).
\end{equation*}
Since $l$ is even and $K(0) = 4\uppi$, this iterates to give $K(l) = 4\uppi/(l+1)$.

When the STF part is taken on the $p$ indices represented by $P$, the result will be zero if any two of those indices are on the same delta.  The same holds for the $q$ indices represented by $Q$, and hence the result can only be nonzero if $p=q$, when $l=2p$.  In this case, of the $(2p)!$ possible orderings of the indices on the deltas, the indices $P$ can be paired with the deltas in $p!$ ways, as can the indices $Q$.  They can be in either order on each delta, so exactly $2^p (p!)^2$ of the $(2p)!$ orderings survive the taking of the STF parts.  Each ordering that survives replaces one of the $Q$-indices on the factor $n'_Q$ by a $P$-index.  It follows that for this case
\begin{equation*}
  n'_Q \int \dif \Omega\, \hat{n}_Q \hat{n}_P = \frac{2^p (p!)^2}{(2p)!}K(2p)\hat{n}'_P = \frac{4\uppi p!}{(2p+1)!!}\hat{n}'_P
\end{equation*}
as required.
\end{proof}

\section{EVALUATION OF THE GENERALIZED POISSON INTEGRAL\label{PI}}
This appendix shows how the operator $\triangle^{-k-1}$ defined by the generalized Poisson integral \eqref{3.7} may be applied to a source expressed as a multipole series.  It then uses this to derive \eqref{3.9} and \eqref{3.9a}.

Consider \eqref{3.7} applied to the multipole expansion
\begin{equation}\label{C0}
  \mathcal{M}(\overline{\tau})(\bx,t) = \sum_{l=0}^\infty \hat{n}_L \bar{\sigma}_L(r,t)
\end{equation}
of the source function $\overline{\tau}$.  By Lemma \ref{lem2a} of Appendix \ref{STF} and with the notation of that lemma, the term $|\bx-\by|^{2k-1}$ can be expanded in Legendre polynomials as
\begin{equation}\label{C1}
  |\bx-\by|^{2k-1} = x^{2k-1}\sum_{m=0}^\infty P_m(\bn\cdot\bn') \sum_{i=0}^k a^{(k)}_{m,i}\left(\frac{y}{x}\right)^{m+2(k-i)}
\end{equation}
for $y<x$, where $\bx = x\bn$ and $\by = y\bn'$.  The result for $y>x$ is obtained by interchanging $\bx$ and $\by$.  When these are used in \eqref{3.7}, the radial and angular integrations for each term of \eqref{C0} separate and the angular one takes the form
\begin{equation}\label{C2}
  \int\dif \Omega'\, \hat{n}'_L\,P_m(\bn\cdot\bn') = \frac{(2m-1)!!}{m!}n_M  \int\dif \Omega'\, \hat{n}'_L \hat{n}'_M = \frac{4\uppi}{2l+1}\delta_{lm}\,\hat{n}_L.
\end{equation}
Here $\dif \Omega'$ is an element of solid angle in the direction $\bn'$, the first equality is by Lemma \ref{lem2} and the second by Lemma \ref{lem8}.  It follows that
\begin{multline}\label{C3}
   \triangle^{-k-1}[\mathcal{M}(\overline{\tau})](\bx,t)\\
   = - \sum_{L=0}^\infty \frac{\hat{n}_L}{(2k)!\,(2l+1)} \sum_{i=0}^k a^{(k)}_{l,i}\bigg\{\int_0^x \dif y\,\bar{\sigma}_L(y,t)y^2 x^{2k-1}\bigg(\frac{y}{x}\bigg)^{l+2(k-i)}\\
   + \int_x^\infty \dif y\,\bar{\sigma}_L(y,t)y^2 y^{2k-1}\bigg(\frac{x}{y}\bigg)^{l+2(k-i)}  \bigg\}.
\end{multline}
This is itself a multipole expansion, showing that the multipole expansion operator $\mathcal{M}$ satisfies
\begin{equation}\label{C3a}
  \mathcal{M}(\triangle^{-k-1}[\overline{\tau}])= \triangle^{-k-1}[\mathcal{M}(\overline{\tau})].
\end{equation}

The regularization process of Section \ref{PN} begins by applying \eqref{3.7} with $\bar{\tau}$ multiplied by $\tilde{r}^B$ and then splitting the range of integration in \eqref{3.7} into the two parts $|\by|<\mathcal{R}$ and $|\by|>\mathcal{R}$.  This is equivalent to multiplying the function $\bar{\tau}(\by,t)$ respectively by $|\tilde{\by}|^B H(\mathcal{R}-|\by|)$ and $|\tilde{\by}|^B H(|\by|-\mathcal{R})$, where $H$ is the Heaviside step function.  In terms of \eqref{C3} this corresponds to multiplying $\bar{\sigma}_L(y,t)$ by $\tilde{y}^BH(\mathcal{R}-y)$ and $\tilde{y}^BH(y-\mathcal{R})$ respectively.  The results for the two parts are then analytically continued in $B$ and added to give $\triangle_B^{-k-1}[\mathcal{M}(\overline{\tau})]$.

The multipole form of \eqref{C3} is retained during these processes, so \eqref{C3a} also holds for $\triangle_B^{-k-1}$.  The final step in the regularization process is to apply the finite part operator $\FP\limits_{B=0}$.  This will act term by term on the multipole series, retaining its form, so that \eqref{C3a} holds for $\widetilde{\triangle^{-k-1}}$ as well.  With the notation of \eqref{3.27} it follows that
\begin{equation}\label{C3b}
  \mathcal{M}(\widetilde{\mathcal{I}^{-1}}[\overline{\tau}])= \widetilde{\mathcal{I}^{-1}}[\mathcal{M}(\overline{\tau})].
\end{equation}
\emph{This appears to contradict equation (3.23) of \cite{PB:2002}, but the apparent contradiction is due to a difference in the meaning of the operator $\mathcal{M}$.  The meaning in \cite{PB:2002} varies from one use of the operator to another.  Although this is explained, it is very misleading.  In the present paper all operators have a consistent meaning throughout and $\mathcal{M}$ always denotes multipole expansion.}

Now consider \eqref{C3} with $\bar{\tau}(\bx,t)= r^a$ and $a \in \mathbb{Z}$.  Apply it to the first of the two parts from the regularization process.  Due to the step function there is no convergence problem at $y=\infty$.  For sufficiently large and positive $\Re(B)$ the contribution from the limit $y=0$ will be zero.  It therefore remains zero when the result of the integration is extended by analytic continuation.  The other integral from the split is similar but in that case there is no convergence problem at $y=0$  and the contribution from the limit $y=\infty$ is zero for sufficiently large and negative $\Re(B)$.  Recombining the two parts after analytic continuation therefore gives
\begin{multline}\label{C4}
 \Delta_B^{-k-1}[\hat{n}_L r^{B+a}] = \frac{\hat{n}_L r^{B+a+2k+2}}{(2k)!\,(2l+1)}\\
 \times \sum_{i=0}^k a^{(k)}_{l,i}
 \bigg\{ \frac{1}{B+a+2i+2-l} -\frac{1}{B+a+2k-2i+3+l} \bigg\}
\end{multline}
where for simplicity the factor $r_0^{-B}$ from the regularization parameter $r_0$ has been cancelled.

The case $k=0$ of \eqref{C4} gives \eqref{3.9} as required.  That result may be iterated and the products that arise expressed in terms of $\Gamma$-functions to give
\begin{equation}\label{C5}
  (\Delta_B^{-1})^{k+1}[\hat{n}_L r^{B+a}] = \hat{n}_L r^{A+2k}f^{(k)}_l(A)
\end{equation}
where $A := B+a+2$ and
\begin{equation}\label{C6}
  f^{(k)}_l(A) := \frac{\Gamma \big((A+1+l)/2\big)\,\Gamma \big((A-l)/2\big)}{2^{2k+2}\,\Gamma \big((A+2k+3+l)/2\big)\,\Gamma \big((A+2k+2-l)/2\big)}.
\end{equation}
Alternatively the products can be expressed in terms of linear partial fractions.  It can be seen from \eqref{C4} and \eqref{C5} that the denominators of the partial fractions are precisely those that occur in \eqref{C4} for the same value of $k$, so that
\begin{equation}\label{C7}
  f^{(k)}_l(A) = \sum_{i=0}^k \bigg\{ \frac{b^{(k)}_{l,i}}{A+2i-l} + \frac{c^{(k)}_{l,i}}{A+2k-2i+1+l} \bigg\}
\end{equation}
where $b^{(k)}_{l,i}$ and $c^{(k)}_{l,i}$ are given by
\begin{equation}\label{C8}
\left.
\begin{array}{c}
  b^{(k)}_{l,i} = \lim\limits_{A \to -2i+l} (A+2i-l)f^{(k)}_l(A), \vspace{6pt} \\ c^{(k)}_{l,i} = \lim\limits_{A \to -2k+2i-1-l} (A+2k-2i+1+l)f^{(k)}_l(A).
\end{array}
\right\}
\end{equation}

For each $\Gamma$-function in \eqref{C6} the limit is given just by taking $A$ to be the limiting value, except where this gives an argument that is a negative integer.  In this case it needs to be transformed by the reflection formula ${\Gamma(z)\Gamma(1-z) = \uppi / \sin(\uppi z)}$.  In this way it is found that
\begin{equation}\label{C9}
   b^{(k)}_{l,i} = -c^{(k)}_{l,i} = \frac{1}{(2k)!\,(2l+1)}\,a^{(k)}_{l,i}
\end{equation}
where $a^{(k)}_{l,i}$ is given by \eqref{A2}.  Putting this back into \eqref{C7} and the result into \eqref{C5} shows, by comparison with \eqref{C4}, that
\begin{equation}\label{C10}
  \Delta_B^{-k-1}[\hat{n}_L r^{B+a}] = (\Delta_B^{-1})^{k+1}[\hat{n}_L r^{B+a}]
\end{equation}
as required.

Finally there is an important extension of this result.  The functional $\Delta_B^{-k-1}$ commutes with $\partial/\partial B$, since this commutes with the separate integrations in the two parts of the split integral for the ranges of $\Re(B)$ for which they converge and differentiation commutes with analytic continuation.  It follows by repeated differentiation with respect to $B$ that
\begin{multline}\label{C11}
  \Delta_B^{-k-1}[\hat{n}_L r^{B+a}(\log r)^p] = (\Delta_B^{-1})^{k+1}[\hat{n}_L r^{B+a}(\log r)^p] \\
  = \frac{\partial^p}{\partial B^p}\Delta_B^{-k-1}[\hat{n}_L r^{B+a}]
  = \frac{\partial^p}{\partial B^p}(\Delta_B^{-1})^{k+1}[\hat{n}_L r^{B+a}].
\end{multline}

\section{POST-NEWTONIAN EXPANSION OF THE EXTERNAL SOLUTION\label{PNOS}}
This appendix provides a direct proof of \eqref{6.14}, to remove any possible doubt as to which version of the operator of instantaneous potentials $\widetilde{\mathcal{I}^{-1}}$ it involves.  Substituting \eqref{5.3} into \eqref{5.6} and expanding the result in powers of $(1/c)$ gives
\begin{equation}\label{B1}
  \frac{16\uppi G}{c^4}\,\overline{S_1^{\mu\nu}}(\bx,t) := \FP\limits_{B=0} \sum_{n=0}^\infty \left(\frac{\partial}{c\,\partial t}\right)^n \left(r_0^{-B} \overline{\underset{n}{S}{}_B^{\mu\nu}}(\bx,t)\right)
\end{equation}
where
\begin{equation}\label{B2}
  \overline{\underset{n}{S}{}_B^{\mu\nu}}(\bx,t) := \int_{-r}^{r} \dif s\,\sum_{l=0}^\infty \hat{\partial}_L \Bigg\{ \frac{(s+r)^l}{2r} \int_0^{\tfrac{s+r}{2}} \dif y\, \frac{(\tfrac{s+r}{2}-y)^l}{l!}\, y^{B-l+1}\frac{(y-s)^n}{n!}\overline{\Lambda_L^{\mu\nu}}(y,t)\Bigg\}
\end{equation}
and as before, an overline denotes post-Newtonian expansion.  This is a relationship between the post-Newtonian expansions of $S_1^{\mu\nu}$ and $\Lambda^{\mu\nu}$.  Note from \eqref{5.3} and \eqref{5.6} that $S_1^{\mu\nu}(\bx,t)$ for a fixed $(\bx,t)$ depends on $\Lambda^{\mu\nu}(\bx',t')$ only for $|\bx'|<|\bx|$ and $|t-t'|<|\bx|/c$ so that nesting post-Newtonian expansions in this way is valid.  Moreover, $\Lambda^{\mu\nu}$ in this region depends only on $h^{\mu\nu}$ within this same region.

On expanding the action of $\hat{\partial}_L$ by \eqref{A9} and putting $y = \tfrac{s+r}{2}z$, this becomes
\begin{multline}\label{B3}
  \overline{\underset{n}{S}{}_B^{\mu\nu}}(\bx,t) = \int_{-r}^{r} \dif s\,\sum_{l=0}^\infty \hat{n}_L
  \sum_{k=0}^l \frac{(-1)^{k+l}(2l-k)!}{k!\,(2l-2k)!!}\,r^{k-l-1} \frac{\partial^k}{\partial r^k}\Bigg\{ \frac{(s+r)^{B+l+2}}{2^{B+3}}\\ \times \int_0^1 \dif z\, \frac{(1-z)^l}{l!\,n!}  z^{B-l+1}\left(\tfrac{s+r}{2}z-s\right)^n \overline{\Lambda_L^{\mu\nu}}(\tfrac{s+r}{2}z,t)\Bigg\}.
\end{multline}
The region dependence described above shows that the asymptotic form of $h^{\mu\nu}$ as $r \to 0$ can be determined by iterating \eqref{B3}.  There are other contributions to $h^{\mu\nu}$ from $X_L^{\mu\nu}$ via \eqref{5.9} and $R_L^{\mu\nu}$ via \eqref{5.7} but their asymptotic forms are known.  They can be expanded about $r=0$ as power series in $r/c$ and their contributions to $\mathcal{M}(h_\ext^{\mu\nu})$ will be terms of the form $f_L^{\mu\nu}(t) r^a$ where $a \in \mathbb{Z}$ with $a \geq -l-1$.  Their contributions to $\Lambda_L^{\mu\nu}$ for a given $l$ will therefore have the same form with $a$ bounded below at any post-Newtonian order $n$ with some lower bound $-N(n)$ for $a$, though $N \rightarrow \infty$ as $n \rightarrow \infty$.

The first iteration of \eqref{B3} will have no other terms, so for now consider one such individual term for one value of $l$, say $\Lambda_L^{\mu\nu} = f_L^{\mu\nu}(t) r^a$.  Put this into \eqref{B3} and expand the bracket with power $n$ by the binomial theorem to give
\begin{multline}\label{B4}
  \overline{\underset{n}{S}{}_B^{\mu\nu}}(\bx,t) = \hat{n}_L f_L^{\mu\nu}(t)  \sum_{k=0}^l \sum_{i=0}^n \frac{1}{l!\,n!} \binom{l}{k} \binom{n}{i}r^l \frac{\partial^{l-k}r^{-l-1}}{\partial r^{l-k}}\\ \times \left\{ \int_{-r}^{r} \dif s\,(-s)^i \frac{\partial^k}{\partial r^k}\left( \frac{(s+r)^{A+l+n-i}}{2^{A+1+n-i+l-k}}\right)\right\}\\ \times \left\{ \int_0^1 \dif z\, (1-z)^l  z^{A-1-l+n-i}\right\}
\end{multline}
where $A:=B+a+2$.  The factor $r^{k-l-1}$ in \eqref{B3} has been converted into a derivative term for later convenience.  The two integrals can now be performed by repeated integration by parts to give
\begin{multline}\label{B5}
  \int_{-r}^{r} \dif s\,\left\{(-s)^i \frac{\partial^k}{\partial r^k}\left( \frac{(s+r)^{A+l+n-i}}{2^{A+1+n-i+l-k}}\right)\right\} \\
  = \sum_{j=0}^i \frac{2^j\,i!}{(i-j)!}(-r)^{i-j} \frac{\partial^k}{\partial r^k}\left(\frac{\Gamma(A+l+n-i+1)}{\Gamma(A+l+n-i+j+2)}r^{A+l+n-i+j+1} \right)
\end{multline}
and
\begin{equation}\label{B6}
  \int_0^1 \dif z\, (1-z)^l  z^{A-1-l+n-i} = \frac{l!\,\Gamma(A+n-i-l)}{\Gamma(A+n-i+1)}.
\end{equation}
Gamma functions are used here instead of factorials since $A \in \mathbb{C}$.  Note that in \eqref{B5} the terms come only from the upper limit of the integral and that the $k$-fold differentiation with respect to $r$ is logically performed \emph{before} the substitution $s=r$.   To achieve the same result with the differentiation performed \emph{after} substitution, as in \eqref{B5}, an extra factor $2^k$ has been placed in the denominator.

When these are substituted back into \eqref{B4}, the surviving terms involving $k$ form the expansion of a $k$-fold derivative by the Leibnitz formula.  Use this to perform first the summation over $k$ and then the $k$-fold derivative itself.  In the result, change the summation variable from $i$ to $m$ where $m = n-i+j$, so that $\sum_{i=0}^n\sum_{j=0}^i$ becomes $\sum_{m=0}^n\sum_{j=0}^m$.  This gives
\begin{equation}\label{B7}
  \overline{\underset{n}{S}{}_B^{\mu\nu}}(\bx,t) = \sum_{m=0}^n\sum_{j=0}^m c_{n,m,j}(A,l) \hat{n}_L r^{A+n} f_L^{\mu\nu}(t)
\end{equation}
where
\begin{multline}\label{B8}
  c_{n,m,j}(A,l) :=  \frac{(-1)^{n-m}\, 2^j}{(m-j)!\,(n-m)!}\,\frac{ \Gamma(A-l+m-j)}{\Gamma(A-l+m+1)} \\ \times \frac{\Gamma(A+l+m-j+1)}{\Gamma(A+l+m+2)}\,\frac{\Gamma(A+m+1)}{\Gamma(A+m+1-j)}.
\end{multline}

In this form $c_{n,m,j}(A,l)$ is defined for any $l \in \mathbb{C}$, not just $l \in \mathbb{N}$.  By manipulating it in this form, it avoids the need to treat special cases as the result can be extended to all $l$ by analytic continuation even if steps in the derivation are not valid for certain integer values.  Now considered as a function of $A \in \mathbb{C}$, each quotient in \eqref{B8} is a polynomial expressed as a product of linear factors, the first two being of degree $(j+1)$ on the denominator and the third being of degree $j$ on the numerator.  For a general $l \in \mathbb{C}$ their linear factors are all distinct, so that $c_{n,m,j}(A,l)$ can be expanded in partial fractions of the form
\begin{equation}\label{B9}
   c_{n,m,j}(A,l) = \sum_{k=m-j}^m \left\{ \frac{a_{n,m,j,k}(l)}{A-l+k} + \frac{b_{n,m,j,k}(l)}{A+l+1+k} \right\}.
\end{equation}
The coefficients are given by
\begin{equation}\label{B10}
\left.
\begin{array}{c}
  a_{n,m,j,k}(l) = \lim\limits_{A \to l-k} (A-l+k)c_{n,m,j}(A,l), \vspace{6pt} \\ b_{n,m,j,k}(l) = \lim\limits_{A \to -l-1-k} (A+l+1+k)c_{n,m,j}(A,l).
\end{array}
\right\}
\end{equation}
For each $\Gamma$-function in \eqref{B8} the limit is given just by taking $A$ to be the limiting value, except where this gives an argument that is a negative integer.  In this case it needs to be transformed by the reflection formula ${\Gamma(z)\Gamma(1-z) = \uppi / \sin(\uppi z)}$.  In this way it is found that
\begin{multline}\label{B11}
  a_{n,m,j,k}(l) = \frac{(-1)^{n-j+k}\, 2^j}{(m-j)!\,(n-m)!\, (k-m+j)!\,(m-k)!}\\ \times \frac{\Gamma(2l+m-j-k+1)}{\Gamma(2l+m-k+2)}\,\frac{\Gamma(l+m-k+1)}{\Gamma(l+m-k+1-j)}
\end{multline}
and
\begin{equation}\label{B12}
  b_{n,m,j,k}(l) = (-1)^{n+1}a_{n,n-m+j,j,n-k}.
\end{equation}
\begin{sloppypar}
Now put \eqref{B9} back into \eqref{B7}, and reorder the triple sum from ${\sum_{m=0}^n \sum_{j=0}^m \sum_{k=m-j}^m}$ to ${\sum_{k=0}^n \sum_{m=k}^n \sum_{j=m-k}^m}$.  Separate the sums over the $a$-terms and $b$-terms and change the variables from $(k,m,j)$ to $(p,q,r)$ with different transformations for each sum.  For the $a$-terms take $p=k$, $q=m-k$, $r=j-m+k$ and for the $b$-terms take $p=n-k$, $q=j-m+k$, $r=m-k$.  With use of \eqref{B12} the result takes the form
\begin{equation}\label{B13}
  \overline{\underset{n}{S}{}_B^{\mu\nu}}(\bx,t) = \hat{n}_L r^{A+n} f_L^{\mu\nu}(t) \sum_{p=0}^n c_{n,p}(l)\left\{ \frac{1}{A-l+p} - \frac{(-1)^n}{A+l+1+n-p} \right\}
\end{equation}
where
\begin{equation}\label{B14}
  c_{n,p}(l) := \sum_{q=0}^{n-p} \sum_{r=0}^p a_{n,p+q,q+r,p}.
\end{equation}
\end{sloppypar}

If \eqref{B11} is used in \eqref{B14}, the result factorizes as
\begin{multline}\label{B15}
  c_{n,p}(l) = \frac{(-1)^{n-p}}{p!\,(n-p)!} \bigg\{\sum_{q=0}^{n-p} \binom{n-p}{q}\frac{(-2)^q \Gamma(l+q+1)}{\Gamma(2l+q+2)} \bigg\} \\ \times \bigg\{\sum_{r=0}^{p} \binom{p}{r}\frac{(-2)^r \Gamma(2l-r)}{\Gamma(l+1-r)} \bigg\}.
\end{multline}
Recall now that here $l \in \mathbb{C}$ is a generic value chosen to avoid special cases.  The reflection formula may therefore be used on both $\Gamma$-functions in the sum over $q$ to put \eqref{B15} in the form
\begin{equation}\label{B16}
  c_{n,p}(l) = (-1)^{n-p+1}\,\frac{2\cos(l\uppi)}{p!\,(n-p)!}f(n-p,-l-1)\,f(p,l)
\end{equation}
where
\begin{equation}\label{B17}
  f(p,l) := \sum_{r=0}^{p} \binom{p}{r}\frac{(-2)^r\, \Gamma(2l-r)}{\Gamma(l+1-r)}.
\end{equation}

Consider now the formal Taylor series expansion of
\begin{equation}\label{B18}
  g(x,y) := (1+y)^{2l} \left( 1-\frac{2x}{1+y}\right)^l.
\end{equation}
This is
\begin{equation}\label{B19}
  g(x,y) = \sum_{i=0}^\infty \frac{\Gamma(l+1)}{i!\,\Gamma(l-i+1)}\,(-2x)^i \bigg( \sum_{j=0}^\infty \frac{\Gamma(2l-i+1)}{j!\,\Gamma(2l-i-j-1)}\,y^j \bigg).
\end{equation}
On setting $y=x$, the series can be rearranged as
\begin{equation}\label{B20}
  g(x,x) := \sum_{p=0}^\infty \frac{\Gamma(l+1)}{p!\,\Gamma(2l-p+1)}\,f(p,l)x^p.
\end{equation}
But
\begin{equation}\label{B21}
  g(x,x) \equiv (1-x^2)^l = \sum_{p=0}^\infty (-1)^p\,\frac{\Gamma(l+1)}{p!\,\Gamma(l-p+1)}\,x^{2p}.
\end{equation}
Equating the two series shows that
\begin{equation}\label{B22}
\left.
\begin{array}{lll}
  f(2p,l)& = &\displaystyle{(-1)^p\,\frac{(2p)!}{p!}\,\frac{\Gamma(2l-2p+1)}{\Gamma(l-p+1)}}\vspace{3pt}\\
  f(2p+1,l)& =& 0
\end{array}
\right\}
\end{equation}
for $p \in \mathbb{N}$.  With the aid of the duplication formula
\begin{equation}\label{B23}
  \Gamma(2z) = \uppi^{-1/2}\,2^{2z-1} \Gamma(z)\Gamma(z+\tfrac{1}{2})
\end{equation}
and further use of the reflection formula, the first of these can be expressed in either of the forms
\begin{equation}\label{B24}
  f(2p,l) = 2^{2l-2p}\,\frac{(2p)!}{p!} \times \left\{
\begin{array}{l}
  (-1)^p\, \uppi^{-1/2}\, \Gamma(l-p+\tfrac{1}{2}) \\
  \mbox{or }-\uppi^{1/2}/\big(\cos(l\uppi) \Gamma(p-l+\tfrac{1}{2}) \big).
\end{array}
\right.
\end{equation}

If \eqref{B24} is used with \eqref{B16} it shows that $c_{n,p}(l)=0$ if either $n$ or $p$ is odd, and that
\begin{equation}\label{B25}
  c_{2n,2p} = \frac{(-1)^p}{2^{2n+1}\,p!\,(n-p)!}\,\frac{\Gamma(l-p+\tfrac{1}{2})}{\Gamma(l+n-p+\tfrac{3}{2})}.
\end{equation}
This brings \eqref{B13} to the form
\begin{multline}\label{B26}
  \overline{\underset{n}{S}{}_B^{\mu\nu}}(\bx,t) = \hat{n}_L r^{B+a+2+n} f_L^{\mu\nu}(t) \sum_{p=0}^n \frac{(-1)^p}{2^{2n+1}\,p!\,(n-p)!}\,\frac{\Gamma(l-p+\tfrac{1}{2})}{\Gamma(l+n-p+\tfrac{3}{2})} \\ \times \left\{ \frac{1}{B+a+2+2p-l} - \frac{1}{B+a+3+2n-2p+l} \right\}.
\end{multline}
But by \eqref{C4} with \eqref{A3b} this is precisely the partial fractions expression for
\begin{equation}\label{B27}
  \overline{\underset{n}{S}{}_B^{\mu\nu}}(\bx,t) = \triangle_B^{-n-1}[\hat{n}_L r^{B+a} f_L^{\mu\nu}(t)].
\end{equation}
Putting this back into \eqref{B1} and making use of the definition \eqref{3.27} gives
\begin{equation}\label{B28}
  \frac{16\uppi G}{c^4}\,\overline{S_1^{\mu\nu}}(\bx,t) = \widetilde{\mathcal{I}^{-1}}[\hat{n}_L \overline{\Lambda_L^{\mu\nu}}]
\end{equation}
for the case $\overline{\Lambda_L^{\mu\nu}}(r,t) = r^a f_L^{\mu\nu}(t)$ that is being considered.

The same result will hold when $\overline{\Lambda_L^{\mu\nu}}$ is a linear combination of such terms.  This is not the most general form for $\overline{\Lambda_L^{\mu\nu}}$, however.  As shown by \eqref{3.15}, such a $\overline{\Lambda_L^{\mu\nu}}$ can lead to terms in $\overline{S_1^{\mu\nu}}$, and therefore in $\mathcal{M}(h_\ext^{\mu\nu})$, of the form $r^a (\log r) f_L^{\mu\nu}(t)$.  These in turn may give rise to contributions to $\overline{\Lambda_L^{\mu\nu}}$ for the next iteration that have the form $r^a (\log r)^p\,f_L^{\mu\nu}(t)$ for $a \in \mathbb{Z}$, $p \in \mathbb{N}$.  Differentiation of \eqref{B27} repeatedly with respect to $B$ gives the corresponding result for $\overline{\Lambda_L^{\mu\nu}}(r,t) = r^a (\log r)^p f_L^{\mu\nu}(t)$ so that \eqref{B28} holds also for such terms.

The finite part operation of the $\widetilde{\mathcal{I}^{-1}}$ operator leaves the general form of such terms unchanged, at most increasing the value of $p$ by one, so the value of $\mathcal{M} (h_\ext^{\mu\nu})$ from this iteration is also composed of terms of this form.  No further type of term arises in the construction of the next iteration for $\overline{\Lambda_L^{\mu\nu}}$ from such $\overline{h_L^{\mu\nu}}$.  It therefore follows that
\begin{equation}\label{B29}
  \overline{\mathcal{M}(h_\ext^{\mu\nu})}(\bx,t) = \sum_{l,a,p} \hat{n}_L r^a (\log r)^p\,F_{L,a,p}^{\mu\nu}(t)
\end{equation}
and
\begin{equation}\label{B30}
  \overline{\mathcal{M}(\Lambda_\ext^{\mu\nu})}(\bx,t) = \sum_{l,a,p} \hat{n}_L r^a (\log r)^p\,G_{L,a,p}^{\mu\nu}(t)
\end{equation}
for some smooth functions $F_{L,a,p}^{\mu\nu}(t)$ and $G_{L,a,p}^{\mu\nu}(t)$ with $l,p \in \mathbb{N}$ and $a \in \mathbb{Z}$ and that
\begin{equation}\label{B31}
  \frac{16\uppi G}{c^4}\,\overline{S_1^{\mu\nu}} = \widetilde{\mathcal{I}^{-1}}[\overline{\mathcal{M}(\Lambda_\ext^{\mu\nu})}]
\end{equation}
as required.  The terms in \eqref{B29} satisfy $a \geq - N(n,l)$ and $p \leq P(n,l)$ at any post-Newtonian order $n$ but $N,P \rightarrow \infty$ as $n \rightarrow \infty$.  The same holds for \eqref{B30} but with different values for $N,P$.

\bibliography{wgdixon_BlanchetError}

\providecommand{\noopsort}[1]{}\providecommand{\singleletter}[1]{#1}%
\begin{thebibliography}{6}%
\makeatletter
\providecommand \@ifxundefined [1]{%
 \@ifx{#1\undefined}
}%
\providecommand \@ifnum [1]{%
 \ifnum #1\expandafter \@firstoftwo
 \else \expandafter \@secondoftwo
 \fi
}%
\providecommand \@ifx [1]{%
 \ifx #1\expandafter \@firstoftwo
 \else \expandafter \@secondoftwo
 \fi
}%
\providecommand \natexlab [1]{#1}%
\providecommand \enquote  [1]{``#1''}%
\providecommand \bibnamefont  [1]{#1}%
\providecommand \bibfnamefont [1]{#1}%
\providecommand \citenamefont [1]{#1}%
\providecommand \href@noop [0]{\@secondoftwo}%
\providecommand \href [0]{\begingroup \@sanitize@url \@href}%
\providecommand \@href[1]{\@@startlink{#1}\@@href}%
\providecommand \@@href[1]{\endgroup#1\@@endlink}%
\providecommand \@sanitize@url [0]{\catcode `\\12\catcode `\$12\catcode
  `\&12\catcode `\#12\catcode `\^12\catcode `\_12\catcode `\%12\relax}%
\providecommand \@@startlink[1]{}%
\providecommand \@@endlink[0]{}%
\providecommand \url  [0]{\begingroup\@sanitize@url \@url }%
\providecommand \@url [1]{\endgroup\@href {#1}{\urlprefix }}%
\providecommand \urlprefix  [0]{URL }%
\providecommand \Eprint [0]{\href }%
\providecommand \doibase [0]{http://dx.doi.org/}%
\providecommand \selectlanguage [0]{\@gobble}%
\providecommand \bibinfo  [0]{\@secondoftwo}%
\providecommand \bibfield  [0]{\@secondoftwo}%
\providecommand \translation [1]{[#1]}%
\providecommand \BibitemOpen [0]{}%
\providecommand \bibitemStop [0]{}%
\providecommand \bibitemNoStop [0]{.\EOS\space}%
\providecommand \EOS [0]{\spacefactor3000\relax}%
\providecommand \BibitemShut  [1]{\csname bibitem#1\endcsname}%
\let\auto@bib@innerbib\@empty
\bibitem [{\citenamefont {Blanchet}(2013)}]{B:2013}%
  \BibitemOpen
  \bibfield  {author} {\bibinfo {author} {\bibfnamefont {L.}~\bibnamefont
  {Blanchet}},\ }\href@noop {} {\bibfield  {journal} {\bibinfo  {journal}
  {Living Reviews in Relativity (to be published)}\ } (\bibinfo {year}
  {2013})},\ \bibinfo {note} {related online version (cited on 22 November
  2013): http://arXiv.org/abs/arXiv:1310.1528v1}\BibitemShut {NoStop}%
\bibitem [{\citenamefont {Poujade}\ and\ \citenamefont
  {Blanchet}(2002)}]{PB:2002}%
  \BibitemOpen
  \bibfield  {author} {\bibinfo {author} {\bibfnamefont {O.}~\bibnamefont
  {Poujade}}\ and\ \bibinfo {author} {\bibfnamefont {L.}~\bibnamefont
  {Blanchet}},\ }\href@noop {} {\bibfield  {journal} {\bibinfo  {journal}
  {Phys. Rev. D}\ }\textbf {\bibinfo {volume} {65}},\ \bibinfo {pages} {124020}
  (\bibinfo {year} {2002})},\ \bibinfo {note} {related online version (cited on
  4 November 2013): http://arXiv.org/abs/arXiv:gr-qc/0112057.}\BibitemShut
  {Stop}%
\bibitem [{\citenamefont {Dixon}(1979)}]{Dixon:1979}%
  \BibitemOpen
  \bibfield  {author} {\bibinfo {author} {\bibfnamefont {W.~G.}\ \bibnamefont
  {Dixon}},\ }in\ \href@noop {} {\emph {\bibinfo {booktitle} {{Proceedings of
  the Third Gregynog Relativity Workshop: Gravitational Radiation Theory, 25-28
  June 1979}}}},\ \bibinfo {editor} {edited by\ \bibinfo {editor}
  {\bibfnamefont {M.}~\bibnamefont {Walker}}}\ (\bibinfo  {publisher}
  {Max-Planck-Institut f\"{u}r Physik und Astrophysik, MPI-PAE/Astro 204},\
  \bibinfo {address} {Munich},\ \bibinfo {year} {1979})\ pp.\ \bibinfo {pages}
  {7--21},\ \bibinfo {note} {{PDF copy available by e-mail on request to
  library@aei.mpg.de quoting Call Number PH05/115, pages 7-21.}}\BibitemShut
  {Stop}%
\bibitem [{\citenamefont {Landau}\ and\ \citenamefont
  {Lifshitz}(1951)}]{LL:1951}%
  \BibitemOpen
  \bibfield  {author} {\bibinfo {author} {\bibfnamefont {L.~D.}\ \bibnamefont
  {Landau}}\ and\ \bibinfo {author} {\bibfnamefont {E.~M.}\ \bibnamefont
  {Lifshitz}},\ }\enquote {\bibinfo {title} {{Classical Theory of Fields}},}\ \
  (\bibinfo  {publisher} {Addison-Wesley, Reading, Massachusetts and Pergamon,
  London},\ \bibinfo {year} {1951})\ Chap.\ \bibinfo {chapter} {11, section
  9},\ \bibinfo {edition} {2nd}\ ed.,\ \bibinfo {note} {translated from the
  Russian by M. Hamermesh.}\BibitemShut {Stop}%
\bibitem [{\citenamefont {Blanchet}\ and\ \citenamefont
  {Damour}(1986)}]{BD:1986}%
  \BibitemOpen
  \bibfield  {author} {\bibinfo {author} {\bibfnamefont {L.}~\bibnamefont
  {Blanchet}}\ and\ \bibinfo {author} {\bibfnamefont {T.}~\bibnamefont
  {Damour}},\ }\href@noop {} {\bibfield  {journal} {\bibinfo  {journal} {Phil.
  Trans. R. Soc. Lond.}\ }\textbf {\bibinfo {volume} {A320}},\ \bibinfo {pages}
  {379} (\bibinfo {year} {1986})}\BibitemShut {NoStop}%
\bibitem [{\citenamefont {Blanchet}(1993)}]{B:1993}%
  \BibitemOpen
  \bibfield  {author} {\bibinfo {author} {\bibfnamefont {L.}~\bibnamefont
  {Blanchet}},\ }\href@noop {} {\bibfield  {journal} {\bibinfo  {journal}
  {Phys. Rev. D}\ }\textbf {\bibinfo {volume} {47}},\ \bibinfo {pages} {4392}
  (\bibinfo {year} {1993})}\BibitemShut {NoStop}%
\end{thebibliography}%
\end{document}